\renewcommand\paragraph{\@startsection{paragraph}{4}{\z@}%
                                    {3.25ex \@plus1ex \@minus.2ex}%
                                    {-1em}%
                                    {\normalfont\normalsize\bfseries}}
\renewcommand\subparagraph{\@startsection{subparagraph}{5}{\parindent}%
                                       {3.25ex \@plus1ex \@minus .2ex}%
                                       {-1em}%
                                      {\normalfont\normalsize\itshape}}
\tikzset{
    none/.style={color=white},
    smallvertex/.style={circle,draw,fill=black, inner sep= .75pt},
    vertex/.style={circle,draw,fill=black, inner sep= 1.5pt},
    marked/.style={draw,color=black,inner sep= 3.1pt},
    tinycircle/.style={draw,circle,fill=white,inner sep=1.10pt},
    gray/.style={fill=black!30},
    hollow/.style={circle,draw,inner sep=1pt},
    node/.style={color=black,circle,draw,dotted,very thick, inner sep= 1.5pt, },
    arc/.style={->,> = latex', thick},
    dashedarc/.style={->,> = latex', dashed, line width=.75pt},
    dashdotarc/.style={->,> = latex', dash dot, line width=.75pt},
    bluearc/.style={preaction={draw,blue,-,double=blue,double distance=.75pt}},
    orangearc/.style={preaction={draw,orange,-,double=orange,double distance=.75pt}},
    edge/.style={-,line width=.75pt},
}
\crefname{rrule}{Reduction Rule}{Reduction Rules}
\crefname{construction}{Construction}{Constructions}
\crefname{claim}{Claim}{Claims}
\crefname{paragraph}{Paragraph}{Paragraphs}
\crefname{observation}{Observation}{Observations}
\crefname{theorem}{Theorem}{Theorems}
\crefname{lemma}{Lemma}{Lemmata}
\crefname{proposition}{Proposition}{Propositions}
\crefname{corollary}{Corollary}{Corollaries}
\crefname{remark}{Remark}{Remarks}
\crefname{section}{Section}{sections}
\crefname{chapter}{Chapter}{Chapters}
\crefname{figure}{Figure}{Figures}
\crefname{table}{Table}{Tables}
\crefname{definition}{Definition}{Definitions}
\crefname{algorithm}{Algorithm}{Algorithms}
\crefname{equation}{equation}{equations}
\crefname{constraint}{constraint}{constraints}
\crefname{ineq}{inequality}{inequalities}
\theoremstyle{plain}
\newtheorem{theorem}{Theorem}
\newtheorem{lemma}[theorem]{Lemma}
\newtheorem{corollary}[theorem]{Corollary}
\newtheorem{observation}[theorem]{Observation}
\newtheorem{rrule}{Reduction Rule}
\theoremstyle{definition}
\theoremstyle{remark}
\newtheorem*{remark}{Remark}
\definecolor{myred}{RGB}{255,153,153}
\definecolor{mygreen}{RGB}{153,255,153}
\DeclareMathOperator{\tw}{\mathrm{tw}}
\DeclareMathOperator{\pw}{\mathrm{pw}}
\DeclareMathOperator{\td}{\mathrm{td}}
\DeclareMathOperator{\cw}{\mathrm{cw}}
\DeclareMathOperator{\mw}{\mathrm{mw}}
\DeclareMathOperator{\fen}{\mathrm{fen}}
\DeclareMathOperator{\fvn}{\mathrm{fvn}}
\DeclareMathOperator{\diam}{diam}
\DeclareMathOperator{\Wone}{W[1]}
\DeclareMathOperator{\Wtwo}{W[2]}
\DeclareMathOperator{\NP}{NP}
\DeclareMathOperator{\Adj}{Adj}
\DeclareMathOperator{\spath}{Path}
\DeclareMathOperator{\idist}{Dist}
\DeclareMathOperator{\visit}{Visit}
\DeclareMathOperator{\mso}{MSO}
\newcommand{\N}{\mathds{N}}
\newcommand{\problemdef}[3]{
	\begin{center}
		\begin{minipage}{0.93\textwidth}
			\textsc{#1}\\[2mm]
			\setlength{\tabcolsep}{3pt}
			\begin{tabularx}{\textwidth}{@{}lX@{}}
				\normalsize \textbf{Input:} 	& \normalsize #2 \\
				\normalsize \textbf{Question:} 	& \normalsize #3
			\end{tabularx}
		\end{minipage}
	\end{center}
}
\title{\LARGE \bf Parameterized Complexity of Geodetic Set}
\author{Leon Kellerhals \and Tomohiro Koana\footnote{TK was partially supported by the DFG projects FPTinP (NI 369/16) and MATE (NI 369/17)}}
\date{Technische Universität Berlin, Algorithmics and Computational Complexity\\\texttt{\small \{leon.kellerhals,tomohiro.koana\}@tu-berlin.de}}
\begin{document}

\maketitle

\begin{abstract}
	A vertex set~$S$ of a graph~$G$ is \emph{geodetic} if every vertex of~$G$ lies on a shortest path between two vertices in~$S$.
	Given a graph $G$ and $k \in \N$, the~$\NP$-hard \textsc{Geodetic Set} problem asks whether there is a geodetic set of size at most $k$.
	Complementing various works on \textsc{Geodetic Set} restricted to special graph classes, we initiate a parameterized complexity study of \textsc{Geodetic Set} and show, on the negative side, that \textsc{Geodetic Set} is~$\Wone$-hard when parameterized by feedback vertex number, path-width, and solution size, combined.
	On the positive side, we develop fixed-parameter algorithms with respect to the feedback edge number, the tree-depth, and the modular-width of the input graph.
\end{abstract}

\section{Introduction}
Let~$G$ be an undirected, simple graph with vertex set~$V(G)$ and edge set~$E(G)$.
The \emph{interval}~$I[u, v]$ of two vertices~$u$ and~$v$ of~$G$ is the set of vertices of~$G$ that are contained in any shortest path between~$u$ and~$v$.
In particular, $u, v \in I[u, v]$.
For a set~$S$ of vertices, let~$I[S]$ be the union of the intervals~$I[u, v]$ over all pairs of vertices~$u$ and~$v$ in~$S$.
A set of vertices~$S$ is called \emph{geodetic} if~$I[S]$ contains all vertices of~$G$.
In this work we study the following problem (see an exemplary illustration in \cref{fig:example}):
\problemdef{Geodetic Set}
{A graph~$G$ and an integer~$k$.}
{Does~$G$ have a geodetic set of cardinality at most~$k$?}

\begin{figure}[t]
	\centering
	\begin{tikzpicture}
		\node[hollow,minimum width=10pt,fill=gray]	(v1) at (-1,2)	 {};
		\node[hollow,minimum width=10pt,fill=gray]	(v2) at (-1,0)	 {};
		\node[hollow,minimum width=10pt]		(v3) at (1,2)	 {};
		\node[hollow,minimum width=10pt]		(v4) at (1,0)	 {};
		\node[hollow,minimum width=10pt]		(v5) at (3,2)	 {};
		\node[hollow,minimum width=10pt]		(v6) at (3,0)	 {};
		\node[hollow,minimum width=10pt]		(v7) at (2,1)	 {};
		\node[hollow,minimum width=10pt]		(v8) at (5.5,1)	 {};
		\node[hollow,minimum width=10pt,fill=gray]	(v9) at (4.25,.5){};

		\draw[edge] (v1) -- (v3) -- (v5) -- (v8) -- (v9) -- (v6) -- (v7) -- (v3) -- (v4) -- (v6);
		\draw[edge] (v2) -- (v4) -- (v7);
		\draw[edge] (v5) -- (v6);

	\end{tikzpicture}
	\caption{An exemplary graph. The gray vertices form a minimum geodetic set. The shortest paths between the top left and the bottom right gray vertex cover all vertices except for the bottom left vertex. Observe that every geodetic set contains all degree-one vertices.}
	\label{fig:example}
\end{figure}
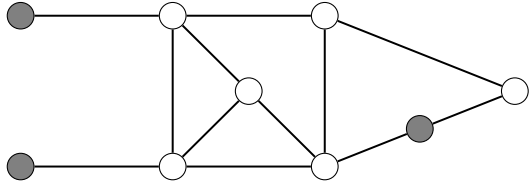

Atici \cite{Ati02} showed that \textsc{Geodetic Set} is~$\NP$-complete on general graphs, and it was shown that the hardness holds even if the graph is planar \cite{CFGGR20}, subcubic \cite{BPPRRS18}, chordal, or bipartite chordal \cite{DPRS10}.
Although not stated, $\Wtwo$-hardness for the solution size $k$ directly follows from the reduction for the latter result of Dourado et al. \cite{DPRS10}.
On the positive side, the problem was shown to be polynomial-time solvable for cographs, split graphs and unit interval graphs~\cite{DPRS10}.
Also, upper bounds on the geodetic set size in Cartesian product graphs were studied~\cite{BKH08}.

For a graph $G$ and $k \in \N$, the closely related \textsc{Geodetic Hull} problem asks whether there is a vertex set $S \subseteq V(G)$ with $I^{|V(G)|}[S] = V(G)$ and $|S| \le k$, where $I^0[S]= S$ and $I^j[S] = I[I^{j - 1}[S]]$ for $j > 0$.
\textsc{Geodetic Hull} is~$\NP$-hard on bipartite~\cite{AMSSW16}, chordal~\cite{BDPR18}, and~$P_9$-free graphs~\cite{DPR16}.
Recently, Kant\'{e} et al.~\cite{KMS19} studied the \emph{parameterized complexity} of \textsc{Geodetic Hull}: they proved that the problem is~$\Wtwo$-hard when parameterized by~$k$, and~$\Wone$-hard but in XP when parameterized by tree-width.\footnote{Informally, this means it can be solved in polynomial time for graphs of constant tree-width.}

\paragraph{Our Contributions.}
Comparing the algorithmic complexity of \textsc{Geodetic Hull} and \textsc{Geodetic Set}, one can observe that both problems are trivial on trees (take all leaves into the solution).
But while \textsc{Geodetic Hull} is polynomial-time solvable on graphs of constant tree-width, the complexity of \textsc{Geodetic Set} on graphs of tree-width two is unknown to the best of our knowledge.
Motivated by this gap, we study the parameterized complexity of \textsc{Geodetic Set} for structural parameters such as tree-width that measure the tree-likeness of the input graph, providing both positive and negative results.

We start off by showing that \textsc{Geodetic Set} is~$\Wone$-hard with respect to tree-width.
More specifically, we show that \textsc{Geodetic Set} is~$\Wone$-hard for feedback vertex number, path-width, and solution size, all three combined (\cref{sec:fvn}), using a parameterized reduction from the~$\Wone$-hard \textsc{Grid Tiling} problem \cite{Marx07}.
Since this reduction implies~$\NP$-hardness, this complements previous results by providing a more fine-grained view on computational tractability in terms of parameterized complexity instead of studying special graph classes.

We complement the~$\Wone$-hardness by presenting two fixed-parameter tractability results for \textsc{Geodetic Set}.
First, we show that \textsc{Geodetic Set} is fixed-parameter tractable with respect to the feedback edge number (\cref{sec:fen}).
It turns out to be quite effortful to obtain fixed-parameter tractability,
requiring the design and analysis of polynomial-time data reduction rules and branching before employing the main technical trick: Integer Linear Programming (ILP) with a bounded number of variables.
To the best of our knowledge, this is the first usage of ILP when solving \textsc{Geodetic Set}.

Second, we show that \textsc{Geodetic Set} is fixed-parameter tractable with respect to clique-width combined with diameter (\cref{sec:cw}); note that \textsc{Geodetic Set} is~$\NP$-hard even on graphs with constant diameter \cite{DPRS10}, and~$\Wone$-hard with respect to clique-width (this follows from our first result).
Our result exploits the fact that we can express \textsc{Geodetic Set} in an~$\mso_1$ logic formula, the length of which is upper-bounded in a function of the diameter of the graph.
A direct consequence of this result is that \textsc{Geodetic Set} is fixed-parameter tractable with respect to tree-depth and with respect to modular-width.

\cref{fig:params} gives an overview of the parameters for which we obtain positive and negative results, and presents their interdependence.

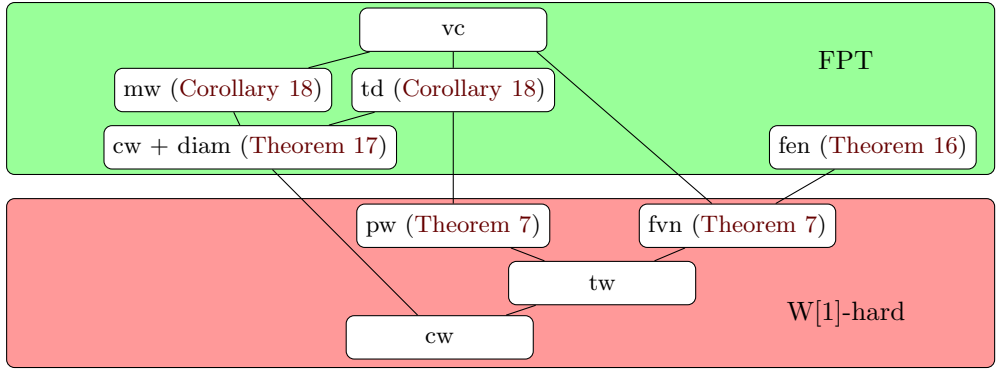
\begin{figure}[t]
	\centering
	\begin{tikzpicture}[xscale=3.56, yscale=0.8]
		\tikzset{
			param/.style={draw, fill=white, rectangle, rounded corners=3, font=\small, minimum width=7em, minimum height=3.8ex},
		}
		\draw[draw=black, fill=mygreen, rounded corners=3] (-1.6, 5.65) rectangle (2.05, 2.8);
		\draw[draw=black, fill=myred, rounded corners=3] (-1.6, 2.4) rectangle (2.05, -0.4);

		\node[param] at (0, 0.1) (cw) {cw};
		\node[param] at (0.6, 1) (tw) {tw};
		\node[param] at (0.05, 1.95) (pw) {pw (\Cref{thm:w1-fvn})};
		\node[param] at (1.1, 1.95) (fvn) {fvn (\Cref{thm:w1-fvn})};
		\node at (1.5, 0.5) {W[1]-hard};

		\begin{scope}[shift={(0, 0.2)}]
			\node[param] at (1.6, 3.05) (fen) {fen (\Cref{thm:fen})};
			\node[param] at (-0.7, 3.05) (cwd) {cw + diam (\Cref{cor:twdiam})};
			\node[param] at (0.05, 4) (td) {td (\Cref{thm:tdtd})};
			\node[param] at (-0.8, 4) (mw) {mw (\Cref{thm:tdtd})};
			\node[param] at (0.05, 5.0) (vc) {vc};
			\node at (1.5, 4.5) {FPT};
		\end{scope}
		\draw (cw) -- (tw) -- (pw);
		\draw (cw) -- (cwd) -- (td) -- (vc);
		\draw (cwd) -- (mw) -- (vc.195);
		\draw (tw) -- (fvn) -- (fen);
		\draw (pw) -- (td);
		\draw (fvn) -- (vc.345);
	\end{tikzpicture}
	\caption{
		An overview of our results for \textsc{Geodetic Set}, containing the parameters vertex cover number (vc), modular-width (mw), tree-depth (td), clique-width (cw), diameter (diam), feedback edge number (fen), path-width (pw), feedback vertex number (fvn) and tree-width (tw).
		An edge between two parameters indicates that the one below is smaller than some function of the other.
	}
	\label{fig:params}
\end{figure}

\section{Preliminaries}
\label{sec:preliminaries}
For~$n \in \N$ let~$[n] = \{1, 2, \dots, n\}$.
The \emph{distance}~$d_G(u, v)$ between two vertices~$u$ and~$v$ in~$G$ is the length of a shortest path between~$u$ and~$v$ (also called \emph{shortest~$u$--$v$-path}).
We drop the subscript~$\cdot_G$ if~$G$ is clear from context.
Note that~$w$ belongs to~$I[u, v]$ if and only if~$d_G(u, v) = d_G(u, w) + d_G(w, v)$.
The \emph{diameter}~$\diam(G)$ of~$G$ is the maximum distance between any two vertices of~$G$.
A \emph{multigraph}~$G$ consists of a vertex set and an edge multiset.
Note that in a multigraph, we count self-loops twice for the vertex degree.

A set~$F \subseteq E(G)$ is a \emph{feedback edge set} if~$G \setminus F$ is a forest.
The \emph{feedback edge number}~$\fen(G)$ is the size of a smallest such set.
Analogously, a set~$V' \subseteq V(G)$ is a \emph{feedback vertex set} if~$G - V'$ is a forest.
The \emph{feedback vertex number}~$\fvn(G)$ is the size of a smallest such set.

For a graph~$G$, a \emph{tree decomposition} is a pair~$(T, B)$, where~$T$ is a tree and~$B \colon V(T) \to 2^{V(G)}$ such that
(i) for each edge~$uv \in E(G)$ there exists~$x \in V(T)$ with~$u, v \in B(x)$, and 
(ii)~for each~$v \in V(G)$ the set of nodes~$x \in V(T)$ with $v \in B(x)$ forms a nonempty, connected subtree in~$T$.
The \emph{width} of~$(T, B)$ is~$\max_{x \in V(T)}(|B(x)| - 1)$.
The \emph{tree-width}~$\tw(G)$ of~$G$ is the minimum width of all tree decompositions of~$G$.
The \emph{path-width}~$\pw(G)$ of~$G$ is the minimum width of all tree decompositions~$(T, B)$ of~$G$ for which~$T$ is a path.

The \emph{tree-depth} of a connected graph~$G$ is defined as follows~\cite{NM12}.
Let~$T$ be a rooted tree with vertex set~$V(G)$, such that if~$xy \in E(G)$, then~$x$ is either an ancestor or a descendant of~$y$ in~$T$.
We say that~$G$ is \emph{embedded in~$T$}.
The \emph{depth} of~$T$ is the number of vertices in a longest path in~$T$ from the root to a leaf.
The \emph{tree-depth}~$\td(G)$ of~$G$ is the minimum~$t$ such that there is a rooted tree of depth~$t$ in which~$G$ is embedded.

We next define the \emph{modular-width} of a graph~$G$~\cite{HP10}.
A vertex set~$M \subseteq V(G)$ is a \emph{module} if for all~$u, v \in M$ it holds that~$N(v) \cap V(G) \setminus M = N(w) \cap V(G) \setminus M$.
We call a module~$M$ \emph{trivial}, if~$|M| \le 1$ or~$M = V$, and we call it \emph{strong} if for every other module~$M'$ of~$G$ we have that~$M \cap M' = \emptyset$, or that one is a subset of the other.
A graph that only admits trivial modules is called \emph{prime}.
Every non-singleton graph can be uniquely partitioned into maximal strong modules~$\mathcal P = \{M_1, \dots, M_\ell\}$ with~$\ell \ge 2$.
Recursively partitioning the graphs~$G[M_i]$ in this way until every module is a single vertex yields a \emph{modular decomposition} of~$G$.
The modular-width is the largest number of trivial modules in a \emph{prime subgraph} $G[M_i]$ of the modular decomposition of~$G$.

A \emph{parameterized problem} is a subset~$L \subseteq \Sigma^* \times \N$ over a finite alphabet~$\Sigma$.
Let~$f \colon \N \to \N$ be a computable function.
A problem $L$ is \emph{fixed-parameter tractable (in $\mathrm{FPT}$)} with respect to~$k$ if~$(I, k) \in L$ is decidable in time~$f(k) \cdot |I|^{O(1)}$ and $L$ is in XP if~$(I, k) \in L$ is decidable in time~$|I|^{f(k)}$.
There is a hierarchy of computational complexity classes for parameterized problems: $\mathrm{FPT} \subseteq \Wone \subseteq \Wtwo \subseteq \cdots \subseteq \mathrm{XP}$.
To show that a parameterized problem~$L$ is (presumably) not in FPT one may use a \emph{parameterized reduction} from a~$\Wone$-hard problem to~$L$.
A parameterized reduction from a parameterized problem~$L$ to another parameterized problem~$L'$ is a function that acts as follows:
For functions~$f$ and~$g$, given an instance~$(I, k)$ of~$L$, it computes in $f(k) \cdot |I|^{O(1)}$ time an instance~$(I', k')$ of~$L'$ so that~$(I, k) \in L \iff (I', k') \in L'$ and~$k' \le g(k)$.

\section{Hardness for Path-width and Feedback Vertex Number}
\label{sec:fvn}
In this section we show that \textsc{Geodetic Set} is~$\Wone$-hard with respect to the feedback vertex number, the path-width and the solution size, combined.
To this end, we present a parameterized reduction from \textsc{Grid Tiling}, which is~$\Wone$-hard with respect to~$k$ \cite{Marx07}:
\problemdef{Grid Tiling}
{A collection~$\mathcal S$ of $k^2$ sets~$S^{i, j} \subseteq [m] \times [m]$, $i, j \in [k]$ (called tile sets), each of cardinality exactly~$n$.}
{Can one choose a tile~$(x^{i, j}, y^{i, j}) \in S^{i, j}$ for each~$i, j \in [k]$ such that~$x^{i, j} = x^{i, j'}$ with~$j' = (j+1) \bmod k$ and~$y^{i, j} = y^{i', j}$ with~$i' = (i+1) \bmod k$?}
This distinguishes our reduction from most parameterized reductions to show~$\Wone$-hardness, as one typically reduces from \textsc{Clique}, or its multicolored variant.
\textsc{Grid Tiling} though seemed to be a much better fit, since the values of the tiles can be expressed by lengths of paths.
This is the central idea for our reduction:
We place a connection gadget between each pair of adjacent tile sets.
Placing paths of fitting lengths, the connection gadget ensures that the vertices corresponding to the tiles agree with each other, that is, the appropriate coordinates of the two tiles are equal.

\begin{remark}
	Throughout this section we write~$i'$ and~$j'$ as shorthands for~$(i+1) \bmod k$ and~$(j+1) \bmod k$, respectively.
	Moreover, we assume that the grid size~$k$ is even.
\end{remark}

\paragraph{Construction.}
Let~$I = (\mathcal S, k, m, n)$ be an instance of \textsc{Grid Tiling}.
We construct an instance of \textsc{Geodetic Set}~$I' = (G, k')$ as follows:
First, we set $k' = k^2 + 4$.
We add the \emph{global vertices}~$\Xi = \{\alpha, \beta, \gamma, \delta\}$ and~$\Xi' = \{\alpha', \beta', \gamma', \delta'\}$, and add four edges~$\alpha \alpha', \beta\beta',\allowbreak\gamma\gamma'$ and~$\delta\delta'$.
Next, for each~$i, j \in [k]$ we introduce \emph{tile vertices}~$S^{i, j} = \{s^{i,j}_1, \dots, s^{i, j}_n\}$.
For a tile vertex~$v$ we denote by~$(x_v, y_v)$ the corresponding tile.
Moreover, for each~$i, j \in [k]$ we introduce two copies of the horizontal and two copies of the vertical connection gadget.

The construction of a \emph{horizontal connection gadget next to tile set~$S^{i, j}$} is as follows.
Let~$S = S^{i, j}$ and let~$S' = S^{i, j'}$ be the vertices of the two horizontally adjacent tile sets.
We introduce the vertices~$a$ and~$b$ called \emph{hidden vertices} and the vertices~$a^*$ and~$b^*$ called \emph{exposed vertices}.
Next, for every tile vertex~$s \in S$ with its corresponding tile~$(x_s, y_s)$, we add
a path of length~$16m + 2x_s + 1$ from~$s$ to~$a$,
and a path of length~$16m - 2x_s + 1$ from~$s$ to~$b$.
For every tile vertex~$s' \in S'$ with its corresponding tile~$(x_{s'}, y_{s'})$, we add
a path of length~$16m - 2x_{s'} + 1$ from~$s'$ to~$a$,
and a path of length~$16m + 2x_{s'} + 1$ from~$s'$ to~$b$.
We call these paths \emph{tile paths towards~$S$}, respectively~$S'$.
We call the neighbors of~$a$, respectively~$b$, \emph{connector vertices towards~$S$}, respectively~$S'$.
The exposed vertices~$a^*$, respectively~$b^*$ are adjacent to all neighbors of~$a$, respectively~$b$.
Moreover, each of $a^*$ and $b^*$ has one additional neighbor:
If~$j$ is even, then~$\alpha$ is a neighbor of~$a^*$ and~$\beta$ is a neighbor of~$b^*$.
If~$j$ is odd, then~$\beta$ is a neighbor of~$a^*$ and~$\alpha$ is a neighbor of~$b^*$.
See \cref{fig:conn-gadget} (left) for an illustration of a horizontal connection gadget next to~$S^{i, j}$ for even~$j$.
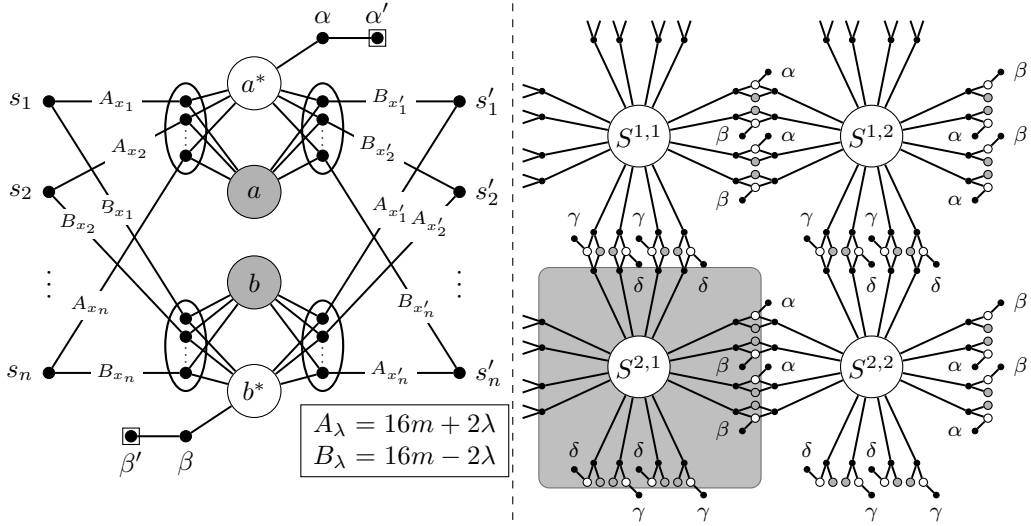
\begin{figure}[t]
	\centering
	\begin{tikzpicture}
		\begin{scope}[xscale=1.8,yscale=1.2]
			\tikzstyle{path} = [color=black,opacity=0.25,line cap=round, line join=round, line width=3.4mm]
			\draw[x radius=0.15cm,y radius=0.5cm,thick] (-0.5,1.7) circle;
			\draw[x radius=0.15cm,y radius=0.5cm,thick] (0.5,1.7) circle;
			\draw[x radius=0.15cm,y radius=0.5cm,thick] (-0.5,-0.7) circle;
			\draw[x radius=0.15cm,y radius=0.5cm,thick] (0.5,-0.7) circle;

			\node (s1) at (-1.5, 2)	[vertex, label={180:$s_1$}] {};
			\node (s2) at (-1.5, 1)	[vertex, label={180:$s_2$}] {};
			\node (s3) at (-1.5, 0)	[rotate=90] {$\dots$};
			\node (sn) at (-1.5, -1)[vertex, label={180:$s_n$}] {};

			\node (sp1) at (1.5, 2)	[vertex, label={0:$s'_1$}] {};
			\node (sp2) at (1.5, 1)	[vertex, label={0:$s'_2$}] {};
			\node (sp3) at (1.5, 0)	[rotate=90] {$\dots$};
			\node (spn) at (1.5, -1) [vertex, label={0:$s'_{n}$}] {};

			\node (a)	at (0, 1.0) [hollow,minimum width=20pt,fill=black!30] {$a$};
			\node (ap)	at (0, 2.2) [hollow,minimum width=20pt] {$a^*$};
			\node (app)	at (0.5, 2.7) [vertex, label={90:$\alpha$}] {};
			\node (alphap)	at (0.9, 2.7) [vertex, label={90:$\alpha'$}] {};
			\node (alphap-marked) at (alphap) [marked] {};
			\draw[edge] (ap) -- (app);
			\draw[edge] (app) -- (alphap);

			\node (b)	at (0, -0.0) [hollow,minimum width=20pt,fill=black!30] {$b$};
			\node (bp)	at (0, -1.2) [hollow,minimum width=20pt] {$b^*$};
			\node (bpp)	at (-0.5, -1.7) [vertex, label={270:$\beta$}] {};
			\node (betap)	at (-0.9, -1.7) [vertex, label={270:$\beta'$}] {};
			\node (betap-marked) at (betap) [marked] {};
			\draw[edge] (bp) -- (bpp);
			\draw[edge] (betap) -- (bpp);

			\node[draw,text width=2.5cm,align=center] at (1.1, -1.75) {
					$A_\lambda = 16m + 2\lambda$

					$B_\lambda = 16m - 2\lambda$
				};

			\node[vertex]	(ta1) at (-0.5, 2.0) {};
			\node[vertex]	(ta2) at (-0.5, 1.8) {};
			\node[rotate=90](ta3) at (-0.5, 1.6) {\tiny $\cdot\!\cdot\!\cdot$};
			\node[vertex]	(tan) at (-0.5, 1.4) {};

			\node[vertex]	(tap1) at (0.5, 2.0) {};
			\node[vertex]	(tap2) at (0.5, 1.8) {};
			\node[rotate=90](tap3) at (0.5, 1.6) {\tiny $\cdot\!\cdot\!\cdot$};
			\node[vertex]	(tapn) at (0.5, 1.4) {};

			\node[vertex]	(tb1) at (-0.5, -0.4) {};
			\node[vertex]	(tb2) at (-0.5, -0.6) {};
			\node[rotate=90](tb3) at (-0.5, -0.8) {\tiny $\cdot\!\cdot\!\cdot$};
			\node[vertex]	(tbn) at (-0.5, -1.0) {};

			\node[vertex]	(tbp1) at (0.5, -0.4) {};
			\node[vertex]	(tbp2) at (0.5, -0.6) {};
			\node[rotate=90](tbp3) at (0.5, -0.8) {\tiny $\cdot\!\cdot\!\cdot$};
			\node[vertex]	(tbpn) at (0.5, -1.0) {};

			\foreach \x / \name / \posa / \posb in {1/1/0.5/0.5, 2/2/0.6/0.2, n/n/0.3/0.5}{
				
				\draw[edge] (s\x)  to node[pos=\posa, fill=white, inner sep=2pt] {\scriptsize $A_{x_\name}$} (ta\x);
				\draw[edge] (sp\x) to node[pos=\posa, fill=white, inner sep=2pt] {\scriptsize $B_{x'_\name}$} (tap\x);
				\draw[edge] (s\x)  to node[pos=\posb, fill=white, inner sep=2pt] {\scriptsize $B_{x_\name}$} (tb\x);
				\draw[edge] (sp\x) to node[pos=\posb, fill=white, inner sep=2pt] {\scriptsize $A_{x'_\name}$} (tbp\x);

				\draw[edge] (ta\x)  to (a);
				\draw[edge] (tap\x) to (a);
				\draw[edge] (tb\x)  to (b);
				\draw[edge] (tbp\x) to (b);
				\draw[edge] (ta\x)  to (ap);
				\draw[edge] (tap\x) to (ap);
				\draw[edge] (tb\x)  to (bp);
				\draw[edge] (tbp\x) to (bp);
			}

		\end{scope}
		\draw[dashed] (3.40, 3.7) -- (3.40, -3.15);
		\begin{scope}[xshift=2.0cm,yshift=5.0cm,scale=1.7]

			\begin{pgfonlayer}{bg}
				\draw[rounded corners,fill=gray,opacity=0.5] (1.025, -2.825) rectangle (2.745, -4.545) {};
			\end{pgfonlayer}
			\foreach \x in {1,2}{
				\foreach \y in {1,2}{
					\begin{scope}[xshift=1.8*\y cm,yshift=-1.8*\x cm]
						\node[vertex,fill=white,minimum width=.75cm] (s) at (0,0) {$S^{\x, \y}$};
						\begin{pgfonlayer}{bg}
						\node[]	(la1p) at (-.9, .40) {};
						\node[]	(la1)  at (-.9, .30) {};
						\node[]	(lb1)  at (-.9, .20) {};
						\node[]	(lb1p) at (-.9, .10) {};
						\node[smallvertex]	(lta1) at (-.75, .35) {};
						\node[smallvertex]	(ltb1) at (-.75, .15) {};

						\node[]	(la2p) at (-.9, -.10) {};
						\node[]	(la2)  at (-.9, -.20) {};
						\node[]	(lb2)  at (-.9, -.30) {};
						\node[]	(lb2p) at (-.9, -.40) {};
						\node[smallvertex]	(lta2) at (-.75,-.15) {};
						\node[smallvertex]	(ltb2) at (-.75,-.35) {};

						\node[]	(uc1p) at (-.40, 0.9)  {};
						\node[]	(uc1)  at (-.30, 0.9)  {};
						\node[]	(ud1)  at (-.20, 0.9)  {};
						\node[]	(ud1p) at (-.10, 0.9)  {};
						\node[smallvertex]	(utc1) at (-.35, 0.75) {};
						\node[smallvertex]	(utd1) at (-.15, 0.75) {};

						\node[]	(uc2p) at (.40, 0.9)  {};
						\node[]	(uc2)  at (.30, 0.9)  {};
						\node[]	(ud2)  at (.20, 0.9)  {};
						\node[]	(ud2p) at (.10, 0.9)  {};
						\node[smallvertex]	(utc2) at (.35, 0.75) {};
						\node[smallvertex]	(utd2) at (.15, 0.75) {};

						\draw[edge] (s) -- (lta1);
						\draw[edge] (s) -- (ltb1);
						\draw[edge] (s) -- (utc1);
						\draw[edge] (s) -- (utd1);
						\draw[edge] (s) -- (lta2);
						\draw[edge] (s) -- (ltb2);
						\draw[edge] (s) -- (utc2);
						\draw[edge] (s) -- (utd2);
						\draw[edge] (lta1) -- (la1p.center);
						\draw[edge] (lta1) -- (la1.center);
						\draw[edge] (ltb1) -- (lb1.center);
						\draw[edge] (ltb1) -- (lb1p.center);
						\draw[edge] (utc1) -- (uc1p.center);
						\draw[edge] (utc1) -- (uc1.center);
						\draw[edge] (utd1) -- (ud1.center);
						\draw[edge] (utd1) -- (ud1p.center);
						\draw[edge] (lta2) -- (la2p.center);
						\draw[edge] (lta2) -- (la2.center);
						\draw[edge] (ltb2) -- (lb2.center);
						\draw[edge] (ltb2) -- (lb2p.center);
						\draw[edge] (utc2) -- (uc2p.center);
						\draw[edge] (utc2) -- (uc2.center);
						\draw[edge] (utd2) -- (ud2.center);
						\draw[edge] (utd2) -- (ud2p.center);
						\end{pgfonlayer}

						\node[tinycircle]	(a1p) at (0.9, .40) {};
						\node[tinycircle,gray]	(a1)  at (0.9, .30) {};
						\node[tinycircle,gray]	(b1)  at (0.9, .20) {};
						\node[tinycircle]	(b1p) at (0.9, .10) {};
						\node[smallvertex]	(ta1)  at (0.75, .35) {};
						\node[smallvertex]	(tb1)  at (0.75, .15) {};

						\node[tinycircle]	(a2p) at (0.9, -.10) {};
						\node[tinycircle,gray]	(a2)  at (0.9, -.20) {};
						\node[tinycircle,gray]	(b2)  at (0.9, -.30) {};
						\node[tinycircle]	(b2p) at (0.9, -.40) {};
						\node[smallvertex]	(ta2) at (0.75,-.15) {};
						\node[smallvertex]	(tb2) at (0.75,-.35) {};

						\node[tinycircle]	(c1p) at (-.40, -.9) {};
						\node[tinycircle,gray]	(c1)  at (-.30, -.9) {};
						\node[tinycircle,gray]	(d1)  at (-.20, -.9) {};
						\node[tinycircle]	(d1p) at (-.10, -.9) {};
						\node[smallvertex]	(tc1) at (-.35,-.75) {};
						\node[smallvertex]	(td1) at (-.15,-.75) {};

						\node[tinycircle]	(c2p) at (.10, -.9) {};
						\node[tinycircle,gray]	(c2)  at (.20, -.9) {};
						\node[tinycircle,gray]	(d2)  at (.30, -.9) {};
						\node[tinycircle]	(d2p) at (.40, -.9) {};
						\node[smallvertex]	(tc2) at (.15,-.75) {};
						\node[smallvertex]	(td2) at (.35,-.75) {};

						\draw[edge] (s) -- (ta1);
						\draw[edge] (s) -- (tb1);
						\draw[edge] (s) -- (tc1);
						\draw[edge] (s) -- (td1);
						\draw[edge] (s) -- (ta2);
						\draw[edge] (s) -- (tb2);
						\draw[edge] (s) -- (tc2);
						\draw[edge] (s) -- (td2);
						\draw[edge] (ta1) -- (a1p);
						\draw[edge] (ta1) -- (a1);
						\draw[edge] (tb1) -- (b1);
						\draw[edge] (tb1) -- (b1p);
						\draw[edge] (tc1) -- (c1p);
						\draw[edge] (tc1) -- (c1);
						\draw[edge] (td1) -- (d1);
						\draw[edge] (td1) -- (d1p);
						\draw[edge] (ta2) -- (a2p);
						\draw[edge] (ta2) -- (a2);
						\draw[edge] (tb2) -- (b2);
						\draw[edge] (tb2) -- (b2p);
						\draw[edge] (tc2) -- (c2p);
						\draw[edge] (tc2) -- (c2);
						\draw[edge] (td2) -- (d2);
						\draw[edge] (td2) -- (d2p);

						\node[smallvertex,label={0:\footnotesize\ifthenelse{\y=1 \OR \y=3}{$\alpha$}{$\beta$}}] (alpha1) at (1.0, 0.50) {};
						\node[smallvertex,label={0:\footnotesize\ifthenelse{\y=1 \OR \y=3}{$\alpha$}{$\beta$}}] (alpha2) at (1.0, -.00) {};
						\node[smallvertex,label={180:\footnotesize\ifthenelse{\y=2 \OR \y=4}{$\alpha$}{$\beta$}}] (beta1) at (0.8, 0.00) {};
						\node[smallvertex,label={180:\footnotesize\ifthenelse{\y=2 \OR \y=4}{$\alpha$}{$\beta$}}] (beta2) at (0.8, -.50) {};
						\node[smallvertex,label={90:\footnotesize\ifthenelse{\x=1 \OR \x=3}{$\gamma$}{$\delta$}}] (gamma1) at (-.50, -.8) {};
						\node[smallvertex,label={90:\footnotesize\ifthenelse{\x=1 \OR \x=3}{$\gamma$}{$\delta$}}] (gamma2) at (0.00, -.8) {};
						\node[smallvertex,label={270:\footnotesize\ifthenelse{\x=2 \OR \x=4}{$\gamma$}{$\delta$}}] (delta1) at (-.00, -1.) {};
						\node[smallvertex,label={270:\footnotesize\ifthenelse{\x=2 \OR \x=4}{$\gamma$}{$\delta$}}] (delta2) at (0.50, -1.) {};
						\draw[edge] (a1p) -- (alpha1);
						\draw[edge] (b1p) -- (beta1);
						\draw[edge] (c1p) -- (gamma1);
						\draw[edge] (a2p) -- (alpha2);
						\draw[edge] (b2p) -- (beta2);
						\draw[edge] (c2p) -- (gamma2);
						\draw[edge] (d1p) -- (delta1);
						\draw[edge] (d2p) -- (delta2);
					\end{scope}
				}
			}
		\end{scope}
	\end{tikzpicture}
	\caption{
		\emph{Left:}
		One copy of a horizontal connection gadget next to~$S^{i, j} = \{s_1, \dots, s_n\}$ where~$j$ is even, connecting the tile sets~$S^{i, j}$ and~$S^{i, j'}$.
		Edges with label~$\ell$ in the figure represent paths of length~$\ell$.
		The ellipses mark the connector vertices towards~$S^{i, j}$ and~$S^{i, j'}$.
		\emph{Right:}
		An exemplary reduction from an instance of \textsc{Grid Tiling}, where~$k = 2$.
		Between every pair of horizontally, resp.\ vertically adjacent tile sets (big circles) there are two copies of horizontal, resp.\ vertical connection gadgets.
		Note that~$\alpha, \beta, \gamma, \delta \in \Xi$ are global; every vertex labeled such is the same vertex.
		The gray square marks the vertices of~$Q^{2, 1}$ (note that~$\beta, \delta \notin Q^{3, 2}$).
		Note that this illustration wraps around its boundaries.
	}
	\label{fig:conn-gadget}
\end{figure}

The construction of a \emph{vertical connection gadget next to tile set~$S^{i, j}$} is identical to the construction of a horizontal gadget, except for the following differences:
\begin{itemize}
	\item the gadget connects tile sets~$S = S^{i, j}$ and~$S' = S^{i', j}$;
	\item the lengths of the tile paths depend on the~$y$-coordinates; and
	\item if~$i$ is even, then~$\gamma$ is a neighbor of~$a^*$ and~$\delta$ is a neighbor of~$b^*$, and if~$i$ is odd, then~$\delta$ is a neighbor of~$a^*$ and~$\gamma$ is a neighbor of~$b^*$.
\end{itemize}

This concludes the construction.
See \cref{fig:conn-gadget} (right) for an overview.

Let~$J$ be the set of all hidden vertices and let~$J^*$ be the set of all exposed vertices.
We now show that this construction has the desired properties for showing~$\Wone$-hardness with respect to solution size, feedback vertex number and path-width, combined.
\begin{observation}
	\label{obs:w1-bounded-parameters}
	The constructed graph~$G$ has~$\pw(G) \le 16k^2 +2$ and~$\fvn(G) \le 16k^2$.
\end{observation}
\begin{proof}
	The graph~$G' = G - (J \cup J^*)$ consists of paths of length one and subdivisions of stars.
	Clearly, $\fvn(G') = 0$, and since removing the center vertex of a subdivision of a star yields disjoint paths, $\pw(G') = 2$.
	Adding a vertex to a graph increases each of the two parameters by at most one.
	Now, as~$|J \cup J^*| = 16k^2$, the claim follows.
\end{proof}

\paragraph{Correctness.}
Let us first point out that the central challenge is to cover all hidden vertices~$J$, as every other vertex is covered by the four degree-one vertices in~$\Xi'$.

\begin{observation}
	\label{obs:allbuthiding}
	$I[\Xi'] = V(G) \setminus J$.
\end{observation}
\begin{proof}
	For~$i, j \in [k]$, for~$s \in S^{i, j}$ let~$(x_s, y_s) \in [m] \times [m]$ be the values of the corresponding tile.
	We show first that all vertices in horizontal connection gadgets are covered.
	Suppose first that~$j$ is even.
	For every~$s \in S^{i, j'}$, there are~$16$ shortest~$\alpha'$--$\beta'$-paths of length~$3 + 16m + 2x_s + 16m + 2x_s + 3 = 32m+6$, each of which is also a shortest~$s$-visiting path.
	Take one each of the two horizontal connection gadgets next to~$S^{i, j}$ and~$S^{i, j'}$, and denote by~$a^*, b^*$, respectively~${a'}^*, {b'}^*$ the exposed vertices to the corresponding connection gadgets.
	Then we have the following shortest~$\alpha'$--$\beta'$-paths via~$s$:
	(1)~one path via~$a^*$, $s$, and~$b^*$,
	(2)~one path via~${b'}^*$, $s$, and~${a'}^*$,
	(3)~one path via~$a^*$, $s$, and~${a'}^*$, and
	(4)~one path via~${b'}^*$, $s$, and~$b^*$.
	The paths described in~(3) and~(4) also use vertices in the horizontal connection gadget next to~$S^{i, j'}$.
	Note that since~$j'$ is odd, the exposed vertex~${a'}^*$ is adjacent to~$\beta$.

	The case that~$j$ is odd behaves analogously.
	Note that~$\alpha$ now is adjacent to the exposed vertex~$b^*$ while~$\beta$ is connected to~$a^*$.
	This gives us that the shortest~$\alpha'$--$\beta'$-paths cover all tile vertices as well as all vertices in horizontal connection gadgets, except for the hidden vertices.
	
	Symmetry gives us that the shortest~$\gamma'$--$\delta'$-paths cover all tile vertices as well as all vertices in vertical connection gadgets, except for the hidden vertices;
	thus~$V(G) \setminus J \subseteq I[\Xi']$.

	It remains to be shown that~$J \cap I[\Xi'] = \emptyset$.
	Note that the neighborhood of any hidden vertex is a subset of the neighborhood of the corresponding exposed vertex.
	Since each vertex in~$\Xi$ is connected to exactly one vertex in~$\Xi'$ and to exposed vertices, $I[\Xi']$ cannot contain any hidden vertex.
\end{proof}
Then the forward direction becomes straightforward: Our geodetic set~$V'$ consists of~$\Xi'$ and, for every tile in the solution of instance~$I$, the corresponding tile vertex.
It is easy to see that for every (copy of a) connection gadget, there are two shortest paths between the chosen tile vertices of any two adjacent tiles, each covering one of the two hidden vertices in the connection gadget.
Compare with \cref{fig:conn-gadget} (hidden vertices are gray).

The backward direction is more involved.
We show in two steps that every solution of our constructed instance consists of~$\Xi'$ and exactly one tile vertex of each tile set.
For this we make use of two properties of our construction.
First, if two vertices are sufficiently far apart, then there is a shortest path via some global vertex that connects them.
\begin{lemma}
	\label{lem:short-global-path}
	For any two vertices~$u, v \in V(G)$ there is a~$u$--$v$-path of length at most~$36m + 6$ that visits some global vertex.
\end{lemma}
\begin{proof}
	We define $\xi_u \in \Xi$ as follows.
	If~$u \in J \cup J^* \cup \Xi \cup \Xi'$, then let~$\xi_u \in \Xi$ be an arbitrary global vertex such that~$d(u, \xi_u) \le d(u, \zeta)$ for all~$\zeta \in \Xi$.
	Suppose that~$u$ is in a (horizontal or vertical) connection gadget.
	Then~$u$ lies on a path between a tile vertex~$u' \in S^{i, j}$, and a connector vertex~$u''$ towards~$S^{i, j}$, where $i, j \in [k]$.
	Let~$\xi_u \in \Xi$ be a global vertex such that~$d(u'', \xi_u) \le d(u'', \zeta)$, for~$\zeta \in \Xi$.
	We define $\xi_v$ analogously.
	If~$\xi_u = \xi_v$, then~$d(u, v) \le d(u, \xi_u) + d(\xi_u, v) \le 16m + 2\lambda + 2 + 2 + 2\lambda' + 16m \le 36m + 6$, where~$\lambda, \lambda' \in [m]$ are either~$x$- or~$y$-values of some tile.

	So suppose that $\xi_u \ne \xi_v$.
	We will prove that
	\begin{equation*}
		\label{eq:twotimesdiam}
		\begin{aligned}
		d(u, \xi_u) + d(\xi_u, v) + d(u, \xi_v) + d(\xi_v, v) =& \ d(\xi_u, u) + d(u, \xi_v) + d(\xi_v, v) + d(v, \xi_u)\\
		\le& \ 2(36m + 6),
		\end{aligned}
	\end{equation*}
	which yields the statement above as $d(u, v) \le \min \{ d(u, \xi_u) + d(\xi_u, v), d(u, \xi_v) + d(\xi_v, v) \}$.
	In particular, we show that $d(\xi_u, u) + d(u, \xi_v) \le 36m + 6$.
	If~$u \notin J$, then $d(\xi_u, u) + d(u, \xi_v) \le d(\xi_u, u') + d(u', \xi_v)$ for some tile vertex $u'$.
	Thus we obtain 
	\[
		d(\xi_u, u) + d(u, \xi_v) \le d(\xi_u, u') + d(u', \xi_v) = 2 + 16m + 2\lambda + 16m + 2\lambda' + 2 \le 36m + 4,
	\]
	where~$\lambda, \lambda' \in [m]$ are either~$x$- or~$y$-values of some tile.
	If~$u \in J$, then we have
	\[
		d(\xi_u, u) + d(u, \xi_v) = 3 + 1 + 16m + 2\lambda + 16m + 2\lambda' + 2 \le 36m + 6.
	\]
	Analogously, $d(\xi_v, v) + d(v, \xi_u) \le 36m + 6$, concluding the proof.
\end{proof}

With \cref{lem:short-global-path} at hand, it is easy to derive from \cref{fig:conn-gadget} (left) the following observation, which is also the reason why the vertices in~$J$ are called \emph{hidden}.

\begin{observation}
	\label{obs:hiding-vertices}
	Let~$u, v \in V(G) \setminus (\Xi \cup \Xi')$.
	If a shortest~$u$--$v$-path visits a global vertex, then none of its inner vertices is a hidden vertex.
\end{observation}

We introduce some additional notation.
The \emph{square} $Q^{i, j}$ of tile set~$S^{i, j}$ is the vertex set consisting of the tile vertices~$S^{i, j}$, the paths between tile vertices and connector vertices towards~$S^{i, j}$, and all hidden vertices and exposed vertices that are in the connection gadgets next to~$S^{i, j}$.
See \cref{fig:conn-gadget} (right) for an illustration of a square.
Note that the squares are pairwise disjoint.
We say that two squares are adjacent if they contain vertices of the same connection gadget.
The \emph{adjacency}~$\Adj(Q^{i, j})$ of a square~$Q^{i, j}$ is the union of squares adjacent to~$Q^{i, j}$.
The \emph{closed adjacency} of a square~$Q^{i, j}$ is the vertex set~$\Adj[Q^{i, j}] = \Adj(Q^{i, j}) \cup Q^{i, j}$.

We show that any solution of $(G, k')$ contains exactly one vertex per square.

\begin{lemma}
	\label{lem:w1-one-per-q}
	A geodetic set~$V' \subseteq V(G)$ of size at most~$k'$ consists of the four vertices in~$\Xi'$, and exactly one vertex in each square~$Q^{i, j}$, for each $i, j \in [k]$.
\end{lemma}
\begin{proof}
	Recall that~$k' = k^2 + 4$.
	The four vertices in~$\Xi'$ are the only vertices of degree one and are part of every geodetic set.
	Further we may assume that~$V' \cap \Xi = \emptyset$ as~$I[V'] = I[V' \setminus \Xi]$.
	So~$V'$ consists of the four vertices in~$\Xi'$ and a set of at most~$k^2$ vertices within the squares, denoted by~$W$.

	For contradiction, assume that there are $q > 0$ squares $Q_1, \dots, Q_q$ such that $Q_p \cap W = \emptyset$ for $p \in [q]$.
	We call these squares \emph{empty}, and all other squares \emph{non-empty}.
	We claim that there is an empty square~$Q_p$ such that~$|\Adj(Q_p) \cap W| \le 8$.
	Let~$W' \subseteq W$ be an arbitrary subset consisting of \emph{exactly one vertex of~$W$\! per non-empty square}.
	So~$|W'| = k^2 - q$ and $|W \setminus W'| \le q$.
	Clearly, for each~$p \in [q]$, we have $|\Adj(Q_p) \cap W'| \le 4$, thus $\sum_{p=1}^q |\Adj(Q_p) \cap W'| \le 4q$.
	Since $\sum_{p = 1}^q |\Adj(Q_p) \cap \{ v \}| \le 4$ for any vertex $v \in V(G)$, we also have
	\[
		\sum_{p = 1}^q | \Adj(Q_p) \cap (W \setminus W')| = \sum_{p = 1}^q \sum_{v \in W \setminus W'} | \Adj(Q_p) \cap \{ v \}| \le 4q.
	\]
	Consequently,
	\[
		\sum_{p=1}^q |\Adj(Q_p) \cap W|
		= \sum_{p=1}^q |\Adj(Q_p) \cap W'| + \sum_{p=1}^q |\Adj(Q_p) \cap (W\setminus W')| \le 4q + 4q = 8q.
	\]
	It follows that there exists an empty square~$Q$ for which~$|\Adj(Q) \cap W| \le 8$.

	Let~$J_Q = J \cap N[Q]$ be the sixteen hidden vertices that are either in~$Q$ or adjacent to vertices of~$Q$.
	The next two claims are consequences of \cref{lem:short-global-path,obs:hiding-vertices}:
	\begin{enumerate}[(1)]
		\item no shortest path between a vertex outside of~$Q$ and a vertex outside of~$\Adj[Q]$ can visit any vertex in~$J_Q$, and
		\item $W$ covers at most~$|\Adj(Q) \cap W| \le 8$ vertices of~$J_Q$.
	\end{enumerate}

	For (1), let~$u \in V(G) \setminus Q$, let~$v \in J_Q$ and let~$w \in V(G) \setminus \Adj[Q]$ (possibly~$v = w$).
	Observe first that any shortest path that visits~$v$ and whose endpoints are not in the connection gadget containing~$v$ visits tile vertices of both incident tiles.
	Also, the shortest path cannot visit any of the global vertices~$\Xi$ as they provide a short cut around~$v$.
	Then it is easy to see that any shortest $u$--$w$-path visiting~$v$ must at some point go through vertices of some square~$Q' \subseteq \Adj(Q)$, then visit~$v$, enter~$Q$, and then leave~$Q$ again before reaching~$w$.
	Within~$Q'$, such a path covers a distance of at least~$16m - 2\lambda + 16m - 2\lambda' \ge 28m$ for~$\lambda, \lambda' \in [m]$.
	Analogously, it covers a distance of at least~$28m$ within~$Q$ as well.
	Thus its length is at least~$56m$, and by \cref{lem:short-global-path} such a path is longer than~$\diam(G)$, contradicting the existence of a shortest~$u$-$w$-path that visits a vertex in~$J_Q$.

	For (2), suppose that there exist $u, u' \in \Adj(Q)$ such that there is a shortest $u$--$u'$-path~$P$ that visits~$v \in J_Q$.
	It is easy to see that $P$ must go through $v' \ne v \in J_Q$.
	Assume without loss of generality that $v$ appears before $v'$ in $P$.
	In order for $P$ to be a shortest path, it must hold that
	\[
		d(u, v) + d(v, v') + d(v', u') \le d(u, u') \le 36m + 6,
	\]
	due to \cref{lem:short-global-path}.
	Since~$d(v, v') = 2 + 16m - 2\lambda + 16m - 2\lambda' + 2$ for some $\lambda, \lambda' \in [m]$, we have~$d(v, v') \ge 28m+4$; thus we can assume that~$d(u, v) + d(u', v') \le 8m + 2$.
	Then, by construction, $u$ (respectively~$u'$) lies on a path between a tile vertex and~$v$ (respectively~$v'$).
	By (1), only the vertices in~$W \cap \Adj(Q)$ can cover the vertices in~$J_Q$.
	Hence, for every vertex~$u \in W \cap \Adj(Q)$ there is exactly one vertex~$v \in J_Q$ that is going to be in~$I[W]$, and the claimed inequality holds.

	Since~$|J_Q| = 16$, the set $V'$ is not geodetic; so there cannot be an empty square in~$G$.
	There are~$k^2$ squares and~$|W| = |V' \setminus \Xi'| \le k^2$.
	So~$|V' \cap Q^{i, j}| = 1$ for each~$i, j \in [k]$.
\end{proof}

Using \cref{lem:w1-one-per-q}, we show that every solution vertex in a square must be a tile vertex.

\begin{lemma}
	\label{lem:w1-only-tile-vertices}
	A geodetic set~$V' \subseteq V(G)$ of size at most $k'$ consists of the four vertices in~$\Xi'$ and exactly one vertex of~$S^{i, j}$, for each $i, j \in [k]$.
\end{lemma}
\begin{proof}
	For~$i, j \in [k]$, let~$S = S^{i, j}$, $S' = S^{i, j'}$,~$Q = Q^{i, j}$, and~$Q' = Q^{i, j'}$.
	Without loss of generality, assume that $j$ is even (see \cref{fig:conn-gadget} for an illustration).
	Let~$X_1$ and~$X_2$ be the two copies of the horizontal connection gadget next to tile~$S$, let~$a_1, b_1 \in V(X_1)$ and~$a_2, b_2 \in V(X_2)$ be the hidden vertices, and let~$a^*_1,  b^*_1 \in V(X_1)$ and $a^*_2, b^*_2 \in V(X_2)$ be the exposed vertices.
	By \cref{lem:w1-one-per-q}, $V'$ contains exactly one vertex $u$ in~$Q$ and exactly one vertex~$v$ in~$Q'$.

	Consider a vertex~$w \in V(G) \setminus (Q \cup Q')$.
	Note that any shortest~$u$--$w$-path and any shortest~$v$--$w$-path going through one of~$a_{1}, a_{2}, b_{1}, b_{2}$ must use tile vertices in~$S$ and~$S'$.
	It is easy to verify that due to its length, such a path must visit some global vertex, thus it cannot visit any hidden vertex (\cref{obs:hiding-vertices}).
	It follows that $\{ a_1, a_2, b_1, b_2 \} \subseteq I[u, v]$.

	For the sake of contradiction, suppose that $u \notin S$.
	In particular, we assume without loss of generality that~$u \in V(X_1)$.
	Let~$u' \in S$ be the tile vertex such that~$u$ lies on the tile path between~$u'$ and $a_1$.
	Observe that~$d(u, a_1) < d(u, a_2)$.
	Hence, no shortest~$u$--$v$-path visits~$a_2$ if $d(v, a_1) \le d(v, a_2)$.
	It follows that $v$ lies on some tile path between~some tile vertex $v' \in S'$ and $a_2$.
	Since there are shortest~$u$--$v$-paths visiting $a_{1}$ and $a_{2}$, we have
	\begin{align*}
		d(u, v) &= (d(a_{1}, u') - d(u', u)) + d(a_{1}, v') + d(v, v') \text{ and } \\
		d(u, v) &= (d(a_2, v') - d(v, v')) + d(a_2, u') + d(u, u').
	\end{align*}
	By construction, $d(a_{1}, u') = d(a_{2}, u') = 16m + 2 x_{u'} + 1$ and $d(a_1, v') = d(a_2, v') = 16m - 2 x_{v'} + 1$.
	Thus, we obtain $d(u, u') = d(v, v')$ and $d(u, v) = 32m + 2 x_{u'} - 2 x_{v'} + 2$.
	Note that there is a $u$--$v$-path visiting $\alpha$ that is of length%
	\[
		\ell = (d(u', a_1^*) - d(u, u')) + 2 + (d(a_2^*, v') - d(v', v)).
	\]
	Since $d(a_1, u') = 16m + 2 x_{u'} + 1$ and $d(v', a_1) = 16m - 2 x_{v'} + 1$ (by construction), and since $\ell \ge d(u, v)$, we obtain~$d(u, u') = d(v, v') \le 1$.
	By the assumption that $u \notin S$, we have $d(u, u') > 0$.
	It follows that $d(u, u') = d(v, v') = 1$.
	Finally, observe that the shortest path from~$u$ to~$v$ that visits~$b_1$ is of length%
	\[
		\ell' = d(u, b_{1}) + d(b_1, v) = 32m - 2x_{u'} + 2x_{v'} + 4.
	\]
	Since $\ell' = d(u, v)$, we obtain $4x_{u'} - 4x_{v'} = 2$, so one of~$x_{u'}, x_{v'}$ cannot be integer---a contradiction.
\end{proof}

Now, given \cref{lem:w1-only-tile-vertices}, if there is a solution for our instance of \textsc{Geodetic Set}, then the tiles corresponding to the chosen tile vertices are a solution for our instance of \textsc{Grid Tiling}.
The main theorem of the section follows:

\begin{theorem}
	\label{thm:w1-fvn}
	\emph{\textsc{Geodetic Set}} is~$\mathrm{W}[1]$-hard with respect to the feedback vertex number, the path-width, and the solution size, combined.
\end{theorem}
\begin{proof}
	Given an instance~$(\mathcal S, k, m, n)$ of \textsc{Grid Tiling}, we construct an instance~$(G, k')$ of \textsc{Geodetic Set} as shown above.
	We now prove that~$(\mathcal S, k, m, n)$ is a yes instance if and only if~$(G, k')$ is a yes-instance.

	For the \emph{if} direction, let~$\mathcal S' = \{(x^{i, j}, y^{i, j}) \in S^{i, j} \mid i, j \in [k] \}$ be a solution for the instance $(\mathcal S, k, m, n)$.
	Then we construct a geodetic set~$V'$ by adding the vertices in~$\Xi'$ and, for every~$i, j \in [k]$, the tile vertex~$s^{i, j} \in S^{i, j}$, corresponding to~$(x^{i, j}, y^{i, j})$.
	Clearly, $|V'| = k' = k^2 + 4$.
	By \cref{obs:allbuthiding}, all vertices in~$V(G) \setminus J$ are covered.
	For~$i, j \in [k]$, let~$a$ and~$b$ be the hidden vertices of one of the copies of the horizontal connection gadget next to~$S^{i, j}$.
	Since~$x^{i,j} = x^{i, j'}$, the shortest~$a$-visiting~$s^{i, j}$--$s^{i, j'}$-paths have length
	\[ d(s^{i, j}, a, s^{i, j'})= 16m + 2x^{i, j} + 1 + 1 + 16m - 2x^{i, j'} = 32m + 2,\]
	and the shortest~$b$-visiting~$s^{i, j}$--$s^{i, j'}$-paths have length
	\[ d(s^{i, j}, b, s^{i, j'})= 16m - 2x^{i, j} + 1 + 1 + 16m + 2x^{i, j'} = 32m + 2.\]
	It is easy to see that there are no shorter~$s^{i, j}$--$s^{i, j'}$-paths.
	So the hidden vertices of the two copies of the horizontal connection gadget are in~$I[V']$.
	Analogously, since~$y^{i, j} = y^{i', j}$, there exist shortest~$v^{i, j}$--$v^{i', j}$-paths that visit
	the hidden vertices of the two copies of the vertical connection gadgets next to~$S^{i, j}$.
	Thus~$V'$ is geodetic and of cardinality~$k'$.

	For the \emph{only if} direction, let~$V'$ be a solution for~$(G, k')$.
	By \cref{lem:w1-only-tile-vertices}, $V'$ consists only of the vertices in~$\Xi'$ and one tile vertex~$s^{i, j}$ for each $i, j \in [k]$.
	Let~$(x^{i, j}, y^{i, j})$ be the corresponding pair.
	Note that the hidden vertices within the copies of~$X^{i, j}$ con only be covered by shortest~$s^{i, j}$--$s^{i, j'}$-paths.
	But in order for these paths to be of equal length, it must hold that~$x^{i, j} = x^{i, j'}$.
	Analogously, in order to cover the hidden vertices within the copies of~$Y^{i, j}$, we must have~$y^{i, j} = y^{i', j}$.
	So choosing the pair~$(x^{i, j}, y^{i, j})$ for each~$i, j \in [k]$ yields a solution for the instance~$(\mathcal S, k, m, n)$ of \textsc{Grid Tiling}.

	Since \textsc{Grid Tiling} is~$\Wone$-hard with respect to~$k$, it follows from the reduction and \cref{obs:w1-bounded-parameters} that \textsc{Geodetic Set} is~$\Wone$-hard with respect to~$k' + \fvn(G) + \pw(G)$.
\end{proof}

\section{Fixed-Parameter Tractability for Feedback Edge Number}
\label{sec:fen}
We now show that \textsc{Geodetic Set} is fixed-parameter tractable for feedback edge number.
In fact, we present a fixed-parameter algorithm for the following, more general variant:

\problemdef{Extended Geodetic Set}
{A graph~$G$, a vertex set $T \subseteq V(G)$, and an integer~$k$.}
{Does~$G$ have a geodetic set $S \supseteq T$ of cardinality at most~$k$?}

The algorithm works in three steps:
We first apply some polynomial-time data reduction rules.
The graph may be arbitrarily large even after they are applied exhaustively.
However, together with some branching steps, they lead to an instance in which a part of the solution vertices are fixed and can be extended to a minimum geodetic set by adding vertices on paths of degree-two vertices.
We determine these vertices using an ILP formulation with $O(\fen(G)^2)$ variables, showing that \textsc{(Extended) Geodetic Set} is fixed-parameter tractable for feedback edge number.

Although feedback edge number is considered one of the largest structural graph parameters, our algorithm is still technically involved and it has an impractical running time.
This hints at the difficulty of designing efficient algorithms for \textsc{Geodetic Set}.
We also remark that some of the techniques presented may be of independent interest.
For example, the presented approach may also be useful to show fixed-parameter tractability of the closely related \textsc{Metric Dimension} problem\footnote{Given a graph, \textsc{Metric Dimension} asks for a set~$S$ of at most $k$ vertices such that for any pair of vertices~$u$ and~$v$, there is a vertex in~$S$ which has distinct distances to~$u$ and~$v$.} for feedback edge number, which was posed as an open problem by Eppstein~\cite{Epp15} (so far, it is only known to be in XP for this parameter~\cite{ELW15}).

This section is divided into three parts.
In \cref{sec:preprocessing}, we provide some polynomial-time data reduction rules, which allow us to bound the number of vertices with degree at least three.
In \cref{sec:guessing}, we guess parts of the solution.
Finally, in \cref{sec:ilp}, we present our ILP formulation to determine the vertices in the solution.

Throughout this section we assume without loss of generality that~$G$ is connected.

\subsection{Preprocessing}
\label{sec:preprocessing}
In this section we present three data reduction rules and some observations on the instance obtained after their exhaustive application.
We will also introduce the \emph{feedback edge graph}~$\widetilde G$ in this subsection, which will be used throughout the presentation of this algorithm.

Our first reduction rule deletes degree-one vertices.
This reduction rule is based on the observation that a geodetic set contains every degree-one vertex.

\begin{rrule}
	\label{rr:leaf}
	If there is a degree-one vertex $v \in V(G)$ with $N(v) = \{ u \}$, then
	\begin{itemize}
		\item decrease $k$ by 1 if $u \in T$, 
		\item add $u$ to $T$ if $u \notin T$, and
		\item delete $v$ from~$V(G)$ (and from~$T$).
	\end{itemize}
\end{rrule}

Henceforth we assume that \cref{rr:leaf} has been exhaustively applied (which can be done in linear time).
Suppose that $\fen(G) = 1$.
Then~$G$ is a cycle, and any minimal geodetic set~$S \supseteq T$ is of size at most~$|T| + 3$.
So \textsc{Extended Geodetic Set} can be solved in polynomial time when $\fen(G) \le 1$ (in fact, further analysis yields a linear-time algorithm for $\fen(G) = 1$).
We thus assume that $\fen(G) \ge 2$.

Now we introduce the \emph{feedback edge graph}~$\widetilde G$, a multigraph which is obtained from~$G$ as follows:
As long as there is a degree-two vertex $v$ with neighbors $u, w$, we remove $v$ and add an edge (multiedge) $uw$.
Using the handshake lemma, one can easily obtain the following.

\begin{observation}
	\label{obs:gtilde}
	It holds that $|V(\widetilde{G})| \le 2 \fen(G) - 2$ and $|E(\widetilde{G})| \le 3 \fen(G) - 3$.
\end{observation}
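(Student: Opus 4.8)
The plan is to follow the cyclomatic number $|E| - |V| + c$ (with $c$ the number of connected components) through the two reduction operations and then convert a bound on ``edges minus vertices'' into absolute bounds by exploiting that every surviving vertex has high degree. Recall that for any graph a minimum feedback edge set leaves a spanning forest with $|V| - c$ edges, so $\fen = |E| - |V| + c$; this is the quantity I would track.

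First I would record that both transformations preserve this quantity as well as connectivity. Removing a degree-$1$ vertex deletes one vertex and one edge and leaves $c$ unchanged; suppressing a degree-$2$ vertex deletes one vertex, replaces its two incident edges by a single (multi)edge, hence removes one edge net, and again leaves $c$ unchanged. In both cases $|E| - |V| + c$ is invariant, so $\fen(\widetilde{G}) = |E(\widetilde{G})| - |V(\widetilde{G})| + c = \fen(G)$. Assuming $G$ is connected (the relevant case for \textsc{Geodetic Set}, since vertices in different components share no shortest path), $\widetilde{G}$ is connected too, so $c = 1$ and $\fen(\widetilde{G}) = |E(\widetilde{G})| - |V(\widetilde{G})| + 1$.

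Next I would argue that $\widetilde{G}$ has minimum degree at least three. By construction no degree-$1$ vertex survives (it would be removed by the first rule) and no degree-$2$ vertex survives (it would be suppressed by the second rule, possibly creating a self-loop or a multiedge). The only way a low-degree vertex could remain is the degenerate case where $\widetilde{G}$ is a single vertex carrying self-loops; but then $\fen(\widetilde{G})$ equals the number of loops, so the hypothesis $\fen(G) \ge 2$ forces at least two loops and thus degree at least four. Hence $\delta(\widetilde{G}) \ge 3$ in all cases.

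Finally I would combine these facts by a handshake count. Writing $n = |V(\widetilde{G})|$ and $m = |E(\widetilde{G})|$, minimum degree three gives $2m = \sum_{v} \deg(v) \ge 3n$, that is $n \le \tfrac{2}{3} m$. Substituting into $\fen(G) = m - n + 1 \ge \tfrac{1}{3} m + 1$ yields $m \le 3\fen(G) - 3$, and then $n \le \tfrac{2}{3} m \le 2\fen(G) - 2$, as claimed. The delicate point is the minimum-degree claim rather than the arithmetic: one must ensure that the degree-$2$ suppression fires even when the two neighbors coincide (so that pendant multiedge ``bubbles'' do not survive) and that the degenerate single-vertex and disconnected configurations are correctly excluded by $\fen(G) \ge 2$ together with connectivity; indeed, several parallel tree components would break the vertex bound, which is why connectivity is used. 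Everything else is routine.
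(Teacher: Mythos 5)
Your proof is correct and follows essentially the same route as the paper: track the quantity $|E|-|V|$ (equivalently the cyclomatic number) through the two transformations, observe that $\widetilde{G}$ has minimum degree at least three, and combine with the handshake lemma. The paper only uses the inequality $|E(\widetilde{G})| \le |V(\widetilde{G})| + \fen(G) - 1$ rather than the exact invariance you establish, but the extra care you take with the self-loop and connectivity edge cases is consistent with, and slightly more explicit than, the paper's argument.
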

\begin{proof}
	By definition, $|E(G)| \le |V(G)| + \fen(G) - 1$.
	It follows that $|E(\widetilde{G})| \le |V(\widetilde{G})| + \fen(G) - 1$, since the number of edges decreases by~$1$ every time we remove a vertex.
	By the handshake lemma, $2|E(\widetilde{G})| = \sum_{v \in V(\widetilde{G})} \deg_{\widetilde{G}}(v) \ge 3|V(\widetilde{G})|$.
	Solving the inequalities for~$|V(\widetilde{G})|$ and $|E(\widetilde{G})|$ respectively yields the sought bounds.
\end{proof}

\begin{figure}[t]
	\centering
	\begin{tikzpicture}
		\begin{scope}[scale=.5]
		\pgfmathsetmacro{\r}{4}
		\pgfmathsetmacro{\rr}{1}
		\pgfmathsetmacro{\hone}{6}
		\pgfmathsetmacro{\htwo}{2}
		\pgfmathsetmacro{\hthree}{3}
		\pgfmathsetmacro{\hfour}{5}
		\pgfmathsetmacro{\hfive}{6}

		\node (s1) at (0, 0) [vertex, label={above left:$v_1$}] {};
		\node (s2) at (\r, 0) [vertex, label={225:$v_2$}] {};
		\node (s3) at (2, {\r*sin(60)}) [vertex, label={above:$v_3$}] {};
		\node (l1) at (-1, -.1) [vertex] {};
		\node (l2) at (-0.1, -1) [vertex] {};

		\draw (l1) -- (s1) -- (l2);
		\draw (s1) to node[vertex, pos=0.333] {} node[vertex, pos=0.666] {} (s2);
		\draw[ultra thick] (s1) -- (s2);
		\draw[bend left] (s1) to node[vertex, pos=.25] {} node[vertex, pos=.5] {} node[vertex, pos=.75] {} (s3);
		\draw[bend right] (s1) to node[vertex, pos=0.5] {} (s3);
		\draw (s2) to node[vertex, pos=0.333] {} node[vertex, pos=0.666] {} (s3);
		\draw (\r + .5, -0.3) circle (.58);
		\begin{scope}[shift={(\r + .5, -0.3)}]
			\node (sl1) at (30:0.58) [vertex] {};
			\node (sl2) at (270:0.58) [vertex] {};
			\begin{scope}[shift={(30:0.58)}]
				\node (a) at (1, 0.1) [vertex] {};
				\node (b) at (0.1, 1) [vertex] {};
			\end{scope}
			\draw (a) -- (sl1) -- (b);
		\end{scope}
		\node (l21) at (0.833, -1) [vertex] {};
		\node (l22) at (1.833, -1) [vertex] {};
		\draw (l21) -- (1.333, 0) -- (l22);
		\begin{scope}[shift={(2.666,2.31)}]	
			\node (x) at (1.1, -.15) [vertex] {};
			\node (y) at (.45, 1) [vertex] {};
			\node (y1) at (1.45, 1.1) [vertex] {};
			\node (y2) at (0.55, 2) [vertex] {};
			\draw (y1) -- (y) -- (y2);
			\draw (x) -- (0, 0) -- (y);
		\end{scope}
		\end{scope}
		\begin{scope}[shift={(5, 0)}, scale=0.5]
			\pgfmathsetmacro{\r}{4}
			\pgfmathsetmacro{\rr}{1}
			\node (s1) at (0, 0) [vertex, label={left:$v_1$}] {};
			\node (s2) at (\r, 0) [vertex, label={225:$v_2$}] {};
			\node (s3) at (2, {\r*sin(60)}) [vertex, label={above:$v_3$}] {};
			\draw[ultra thick] (s1) -- (s2) node [midway,label=below left:{}] {};
			\draw (s1) to [bend right] node [midway,below right] {} (s3);
			\draw (s1) to [bend left] node [midway,above left] {} (s3);
			\draw (\r + .5, -0.3) circle (.58);
			\draw (s2) -- (s3) node [midway,label=above right:{}] {};
			\node at (\r + 1.7 ,0) {};
		\end{scope}
	\end{tikzpicture}
	\caption{
		An illustration of an input graph $G$ (left) and $\widetilde{G}$ after \cref{rr:leaf} has been exhaustively applied (right).
		Observe that $\widetilde G$ contains no degree-one or degree-two vertex.
		For instance, a thick edge $p$ in $\widetilde{G}$ (right) corresponds to a path $P$ of length $h_p = 3$ in $G$(left).
		Moreover, we have $T_p = \{ 0, 1 \}$ after \cref{rr:leaf} has been applied exhaustively.}
	\label{fig:feg}
\end{figure}
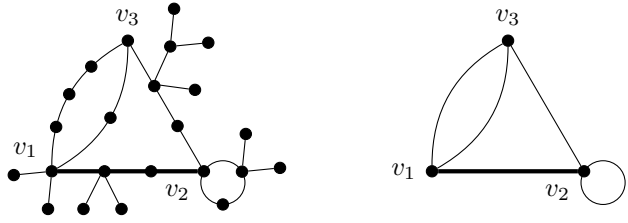

Observe that each edge~$p$ in~$\widetilde{G}$ is associated with a path $P = (p^0, p^1, \dots, p^{h_p})$ in $G$ where all of its inner vertices are of degree~2.
We sometimes refer to the endpoints $p^0, p^{h_p}$ as $p^\leftarrow, p^\rightarrow$, respectively.
Moreover, let~$T_p = \{ i \mid p^i \in T \}$ and let $p_T^\leftarrow = p^{t_p^\leftarrow}$ and $p_T^\rightarrow = p^{t_p^\rightarrow}$, where $t_p^\leftarrow = \min {T_p}$ and $t_p^\rightarrow = \max {T_p}$.
We illustrate the definitions in \cref{fig:feg}.

The following reduction rule deals with self-loops in~$\widetilde{G}$.

\begin{rrule}
	\label{rr:deg2cycle}
	If $v \in V(\widetilde{G})$ has a self-loop $p$ in $\widetilde{G}$, then decrease $k$ as follows:
	\begin{itemize}
	 	\item If $T_p = \emptyset$, then decrease $k$ by $(h_p \bmod 2)$.
	 	\item If $T_p \ne \emptyset$ and $V(P) \not\subseteq I[T_p \cup \{ v \}]$, then decrease $k$ by $|T_p|$.
	 	\item If $T_p \ne \emptyset$ and $V(P) \subseteq I[T_p \cup \{ v \}]$, then decrease $k$ by $|T_p| - 1$.
	 \end{itemize}
	Moreover, add $v$ to $T$ and remove $V(P) \setminus \{ v \}$.
\end{rrule}

\begin{lemma}
	\cref{rr:deg2cycle} is correct.
\end{lemma}
\begin{proof}
	We reduce the first two cases to the third case with the following observations:
	\begin{itemize}
		\item
			If $T_p = \emptyset$, then $(G, T, k)$ is equivalent to $(G, T \cup \{ p^{\lfloor h_p / 2 \rfloor}, p^{\lceil h_p / 2 \rceil} \}, k)$.
			Then~$V(P) \subseteq I[T_p \cup \{v\}]$ and~$|T_p| = (h_p \mod 2)$.
		\item
			If $T_p \ne \emptyset$ and $V(P) \not\subseteq I[T_p \cup \{ v \}]$, then it is equivalent either to $(G, T \cup \{p^{\lfloor h_p / 2 \rfloor}\}, k)$ or to $(G, T \cup \{p^{\lceil h_p / 2 \rceil}\}, k)$. 
			Then~$V(P) \subseteq I[T_p \cup \{v\}]$ and~$|T_p|$ increases by one.
	\end{itemize}
	So assume that $T_p \ne \emptyset$ and $V(P) \subseteq I[T_p \cup \{ v \}]$.

	Let $(G', T', k')$ be an \textsc{Extended Geodetic Set} instance as a result of \cref{rr:deg2cycle}.
	Note that $G' = G - (V(P) \setminus \{ v \})$, $T' = T \cup \{ v \}$, and $k' = k - |T_p| + 1$.
	It is easy to see that if $S \supseteq T$ is geodetic in $G$ and $|S| \le k$, then $(S \setminus V(P)) \cup \{ v \}$ is a solution of $(G', T', k')$.
	Conversely, if $S' \supseteq T'$ is a geodetic set in $G'$ of size at most $k'$, then $(S' \setminus \{ v \}) \cup T_p$ is a solution of $(G, T, k)$.
\end{proof}

The next reduction rule ensures that for every~$p \in E(\widetilde G)$ with $T_p \ne \emptyset$, there is a shortest path from an endpoint of $P$ to the closest vertex in~$T_p$ that is contained inside~$P$.
For this we introduce the following notation.
Let $\mathcal{R} = \{ \leftarrow, \rightarrow \}$.
For $r \in \mathcal{R}$, we denote by~$\overline{r} \in \mathcal R \setminus \{r\}$ the opposite direction.

\begin{rrule}
	\label{rr:addleaf}
	Let $p \in E(\widetilde{G})$ with $T_p \ne \emptyset$, and let~$r \in \mathcal R$.
	If~$d_P(p_T^r, p^r) > d_P(p_T^r, p^{\overline r}) + d_G(p^{\overline r}, p^r)$,
	then add~$p'$ to~$T$, where~$p'$ is between~$p_T^r$ and~$p^r$ and
	$d(p', p_T^r) = \lfloor (h_p + d_G(p^\leftarrow, p^\rightarrow)) / 2 \rfloor$.
\end{rrule}
\begin{lemma}
	\cref{rr:addleaf} is correct.
\end{lemma}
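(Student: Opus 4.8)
I would prove this by showing that attaching the new leaf keeps the instance equivalent, i.e.\ that $G$ and the graph $G'$ (with a new pendant $\ell$ at $v_i^l$) have geodetic sets of the same minimum size. By the left–right symmetry of the two bullets it suffices to treat the first case, so write $d := d_G(v_i^\leftarrow,v_i^\rightarrow)$, assume $2l_i^\leftarrow - h_i > d$, and set $l = l_i^\leftarrow - \lfloor (h_i+d)/2 \rfloor$. Three elementary facts drive everything. First, $l\ge 1$ and $2l-h_i\le d$, so $v_i^l$ lies in the ``left half'' of $P_i$ and its shortest paths along $P_i$ to $v_i^\leftarrow$ and to every $v_i^j$ with $j<l$ are direct. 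Second, since $l_i^\leftarrow\le h_i$ the trigger gives $d<2l_i^\leftarrow-h_i\le h_i$, i.e.\ $d<h_i$, so the length-$d$ route between $v_i^\leftarrow$ and $v_i^\rightarrow$ strictly beats $P_i$. Third, the choice $l=l_i^\leftarrow-\lfloor(h_i+d)/2\rfloor$ makes the direct and the around distance between $v_i^l$ and $v_i^{l_i^\leftarrow}$ tie, so with $\ell$ added and the leaf $\ell_0$ already present at $v_i^{l_i^\leftarrow}$ we have $I[\ell,\ell_0]=\{v_i^l,\dots,v_i^{l_i^\leftarrow}\}$. Note that $\ell$ is a degree-one vertex, so by \Cref{obs:deg-1} it is forced into every geodetic set of $G'$; the whole point is that this extra vertex is paid for.

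The direction ``$G'$ has a geodetic set of size $\le k$ $\Rightarrow$ so does $G$'' is easy. Given geodetic $S'$ of $G'$ with $\ell\in S'$, put $S:=(S'\setminus\{\ell\})\cup\{v_i^l\}$. Deleting the pendant $\ell$ does not change distances among the vertices of $G$, so any $x\in V(G)$ covered in $G'$ by a pair avoiding $\ell$ is still covered in $G$ by that pair; and since every shortest $\ell$--$b$ path starts $\ell,v_i^l,\dots$, we have $I[\ell,b]\subseteq\{\ell\}\cup I[v_i^l,b]$, so any $x\ne\ell$ covered via $\ell$ is covered by $(v_i^l,b)\subseteq S$. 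Hence $S$ is geodetic with $|S|\le|S'|$.

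For the converse I would argue that some minimum geodetic set of $G$ already ``uses'' the position $v_i^l$, which then lets $\ell$ take its place for free. The key necessity claim is that in \emph{any} geodetic set $S$ of $G$, covering the internal vertex $v_i^l$ forces a solution vertex on the \emph{interior} of $P_i$ at a position in $[1,l_i^\leftarrow)$. Indeed, a shortest path realising $v_i^l\in I[s,s']$ must leave $v_i^l$ through both incident edges; if both $s,s'$ lay off the interior of $P_i$, that path would traverse $P_i$ from $v_i^\leftarrow$ to $v_i^\rightarrow$, contradicting $d<h_i$. Moreover an endpoint equal to $v_i^\leftarrow$ (or reached past it in the graph) forces the \emph{other} endpoint onto the interior, since a shortest path from $v_i^\leftarrow$ to anything beyond $v_i^\rightarrow$ again prefers the length-$d$ detour; the crossover inequality then pins the resulting interior endpoint to a position $q\le (h_i+d)/2<l_i^\leftarrow$. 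Given such an interior anchor $s^\dagger=v_i^p$ with $1\le p<l_i^\leftarrow$, I would set $S':=(S\setminus\{s^\dagger\})\cup\{\ell\}$ and verify geodicity in $G'$: the segment $[v_i^l,v_i^{l_i^\leftarrow}]$ is covered by $I[\ell,\ell_0]$, the segment $[v_i^{l_i^\leftarrow},v_i^{l_i^\rightarrow}]$ by \Cref{lemma:nonebetweenleaves}, and $[v_i^0,v_i^l]$ by $\ell$'s direct leftward reach paired with the same left partner that $S$ used to cover the left tail, while $\ell$ at $v_i^l$ reaches strictly farther right than $s^\dagger$ and so dominates its rightward coverage.

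The main obstacle is precisely this last verification: making sure that $\ell$, together with $\ell_0$ and the surviving partners, recovers \emph{everything} the deleted anchor $s^\dagger$ uniquely covered. The delicate sub-cases are (i) when $s^\dagger$ itself served as the left endpoint covering part of the tail $\{v_i^0,\dots,v_i^{l-1}\}$, so that after its removal one must re-exhibit a left partner for $\ell$ to cover the stretch between the old anchor and $v_i^l$, and (ii) when $s^\dagger$'s coverage spills past $v_i^{l_i^\rightarrow}$ into the right tail, which I would absorb using \Cref{rr:deg1deg2deg1} and the between-leaves covering (and, for the symmetric bullet, the mirror-image argument at $v_i^\rightarrow$). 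Carefully discharging these boundary cases — together with the degenerate situations $l=1$ and ties at the crossover — is where the real work lies; the inequalities invoked above are the routine crossover computations.
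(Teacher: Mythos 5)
Your overall plan---show that attaching the pendant at $v_i^l$ preserves the minimum geodetic set size by exhibiting a forced ``anchor'' vertex that the new leaf can replace---matches the paper's, and your crossover computations are the right ones. But there is a genuine gap at the decisive step. You only force an anchor at a position in $[1, l_i^\leftarrow)$ (and your further ``pinning'' to $q \le (h_i+d)/2$ bounds $q$ from the wrong side), whereas what is needed, and what the paper proves, is that every geodetic set must contain a vertex of the specific segment $P = \{v_i^l, \dots, v_i^{l_i^\leftarrow-1}\}$. The paper obtains this from the strict inequality at $v_i^{l-1}$: the route from $v_i^{l-1}$ to $v_i^{l_i^\leftarrow}$ that exits through $v_i^\leftarrow$ has length $\lceil (h_i+d)/2\rceil - 1$, strictly less than the direct length $\lfloor (h_i+d)/2\rfloor + 1$, so no shortest path with both endpoints outside $P$ passes through $v_i^l$. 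Once the anchor is known to lie in $P$, the swap is immediate: $P$ is covered by the interval between the leaves at $v_i^l$ and $v_i^{l_i^\leftarrow}$ (the direct route between them ties or beats the detour), and every shortest path with exactly one endpoint in $P$ leaves $P$ through $v_i^l$ or $v_i^{l_i^\leftarrow}$, both of which carry solution leaves, so nothing outside $P$ loses coverage.

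With your weaker localization the swap can genuinely fail, which is exactly the ``sub-case (i)'' you leave open: if the anchor sits at a position $p < l$, its leftward coverage of $\{v_i^0,\dots,v_i^{l-1}\}$ paired with some partner $x$ need not be reproducible from $v_i^l$, since moving the endpoint from $v_i^p$ to $v_i^l$ lengthens the left exit by $l-p$ and shortens the right exit by the same amount, so the left exit may cease to be a shortest route towards $x$. Concretely, with $h_i + d = 10$ and $l_i^\leftarrow = 8$ (so $l = 3$), a solution containing $v_i^2$ and $v_i^7$ covers $v_i^3$ via $I[v_i^7, v_i^2]$ (both routes between them have length $5$); your pinning $q \le 5$ would select $v_i^2$ for replacement, but removing $v_i^2$ can leave $v_i^0, v_i^1, v_i^2$ uncovered, while the correct choice is $v_i^7 \in P$. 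So the missing ingredient is not boundary-case bookkeeping but the sharper necessity claim itself; without it the replacement argument does not go through.
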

\begin{proof}
	Suppose that $(G, T, k)$ is a yes-instance with a solution $S \supseteq T$.
	Let $P'$ be a subpath of $P$ with endpoints $p'$ and $p_T^r$.
	Note that
	\begin{align*}
		d_{P}(p', p^r) + d_G(p^r, p^{\overline{r}}) + d_{P}(p^{\overline{r}}, p_T^r)
		&= (t_P^r - d_P(p', p_T^r)) + d_G(p^r, p^{\overline{r}}) + (h_p - t_P^r) \\
		&= \lceil (h_i + d_G(p^r, p^{\overline{r}})) / 2 \rceil.
	\end{align*}
	Thus, $S$ must contain a vertex $v \in V(P) \setminus \{ p_T^r \}$ to cover $P'$.
	The correctness follow, because $(S \setminus \{ v \}) \cup \{ p' \}$ is also geodetic in $G$.
\end{proof}

\subsection{Guessing}
\label{sec:guessing}
Towards obtaining a geodetic set~$S$ of size at most~$k$, we extend our current set~$T$ of vertices fixed in the solution.
First we guess the set of endpoints that are in the solution.
Next, using another reduction rule, we fix further vertices that are required to be in the geodetic set of our interest.
These vertices possibly depend on the (previously guessed) endpoints that are in the solution.
Finally, we guess how many vertices we need to add to every path~$P$ for~$p \in E(\widetilde G)$.
Then, the exact positions of these vertices are determined using ILP.

Suppose that~$(G, T, k)$ is a yes-instance.
We fix a solution~$S$ of minimum size that maximizes the number~$|S \cap V(\widetilde G)|$ of endpoints among all such solutions.
Intuitively, our goal is to find $S$.
To do so, we first guess the set~$\widetilde S = S \cap V(\widetilde G)$ of endpoints in~$S$;
there are at most~$2^{|V(\widetilde G)|} \le 2^{2\fen(G) - 2}$ possibilities by \cref{obs:gtilde}.
We extend $T$ by adding all vertices from $\widetilde S$.
So we will henceforth assume that $S \cap V(\widetilde G) = T \cap V(\widetilde G)$.
Using another reduction rule, we ensure that for every~$p \in E(\widetilde G)$, the vertices between~$p_T^\leftarrow$ and~$p_T^\rightarrow$ are covered.
\begin{rrule}
	\label{rr:deg1deg2deg1}
	Let $p \in E(\widetilde{G})$. 
	If there are $t < t' \in T_p$ such that $[t + 1, t' - 1] \cap T_p = \emptyset$ and $d_G(p^t, p^{t'}) < t' - t$ (equivalently, $d_G(p^\leftarrow, p^\rightarrow) + h_p < 2t' - 2t$),
	then add $\lfloor (t + t') / 2 \rfloor$ to $T$.
\end{rrule}
\begin{lemma}
	\cref{rr:deg1deg2deg1} is correct.
\end{lemma}
\begin{proof}
	Let $S$ be a geodetic set and let $V_p = \{ p^{t + 1}, \dots, p^{t' - 1} \}$.
	It suffices to show that $S \cap V_p \ne \emptyset$ and that $S' = (S \setminus V_p) \cup \{ p^{\lfloor (t + t') / 2 \rfloor} \}$ is geodetic.
	Suppose that $S \cap V_p = \emptyset$.
	Then for each $v, v' \in S$, no shortest path between $v$ and $v'$ visits a vertex in $V_i$.
	Hence, we have $S \cap V_p \ne \emptyset$.
	For the latter part, it is easy to see that $S'$ is geodetic because $V_p \subseteq I[\{ p^t, p^{\lfloor (t + t') / 2 \rfloor}, p^{t'} \}]$.
\end{proof}

We will prove two lemmata required for the next guessing step and for the subsequent ILP formulation.
First, we show that $S$ contains no vertex on a path~$P$ for~$p \in E(\widetilde G)$ with~$T_p \ne \emptyset$.

\begin{lemma}
	\label{lemma:nonemptyp}
	Let $p \in E(\widetilde{G})$ with $T_p \ne \emptyset$.
	Then, $S \cap V(P) \subseteq T_p$.
\end{lemma}
\begin{proof}
	For $r \in \mathcal{R}$, suppose that~$S$ contains a vertex $p^i \in V(P) \setminus T_p$ that lies between $p^r$ and~$p^r_T$.
	Since \cref{rr:addleaf} is applied exhaustively, $(S \setminus \{ p^i \}) \cup \{ p^r \}$ is also a solution of minimum size, contradicting the maximality of $|S \cap V(\widetilde G)|$.
	Thus, it remains to show that~$S$ contains no vertex that lies between $p_T^\leftarrow$ and $p_T^\rightarrow$ in $P$.
	Note that after applying \cref{rr:deg1deg2deg1}, each vertex in~$P$ between~$p_T^\leftarrow$ and~$p_T^\rightarrow$ are included in~$I[T_p]$.
	Due to its minimality, $S$ contains no vertex $p^i \in V(P) \setminus T_p$ between $p^\leftarrow_T$ and $p^{\rightarrow}_T$ in~$P$.
\end{proof}

We also show that $S$ contains at most two inner vertices of~$P$ if~$T_p = \emptyset$ for~$p \in E(\widetilde G)$.
\begin{lemma}
	\label{lemma:emptyp}
	Let $p \in E(\widetilde G)$ with $T_p = \emptyset$.
	Then, $|S \cap V(P)| \le 2$.
\end{lemma}
\begin{proof}
	If~$|S \cap V(P)| = 3$, then $(S \setminus V(P)) \cup \{ p^\leftarrow, p^{\lfloor h_p/2\rfloor}, p^\rightarrow \}$ is also a minimum solution, contradicting the fact that $|S \cap V(\widetilde G)|$ is maximized.
\end{proof}

Now we make further guesses.
For each edge~$p \in E(\widetilde G)$, we guess the number $n_p \in \{0, 1, 2\}$ of inner vertices in $S \cap V(P)$.
Note that there are at most~$3^{|E(\widetilde G)|} \le 3^{3\fen(G)-3}$ possibilities by \cref{obs:gtilde}.
The next step is to determine exactly which vertices to take using ILP.

\subsection{Finding a minimum geodetic set via ILP}
\label[section]{sec:ilp}
\newcommand{\oI}{E}
\newcommand{\ou}{p_S}
Let~$\oI_n = \{p \in E(\widetilde G) \mid T_p=\emptyset, n_p = n\}$ for~$n \in \{0, 1, 2\}$ and let~$\oI' = \{p \in E(\widetilde G) \mid T_p \ne \emptyset\}$.
Further, let~$\mathcal \oI = \oI_1 \cup \oI_2 \cup \oI' = E(\widetilde G) \setminus \oI_0$.
Note that~$S$ contains at least one vertex in~$V(P)$ for every~$p \in \mathcal \oI$.
For each~$p \in \mathcal \oI$, we introduce two nonnegative variables~$x_p^\leftarrow, x_p^\rightarrow$, and let~$\ou^\leftarrow = p^{x_p^\leftarrow}$ and~$\ou^\rightarrow = p^{h_p - x_p^\rightarrow}$.
The intended meaning of~$x_p^\leftarrow$, respectively~$x_p^\rightarrow$ is that~$S$ contains~$\ou^\leftarrow$, respectively~$\ou^\rightarrow$.
Then the geodetic set of our interest will be given by~$X = T \cup \bigcup_{p \in \oI_1 \cup \oI_2} \{\ou^\leftarrow, \ou^\rightarrow\}$.
For each~$p \in \mathcal \oI$ we add the following constraints:
\begin{equation} \label[constraint]{constraint:x}
  \begin{cases}
    x_p^\leftarrow > 0, x_p^\rightarrow > 0, \text{and } x_p^\leftarrow + x_p^\rightarrow \le h_p & \text{if } p \in \oI_1 \cup \oI_2, \\
    x_p^\leftarrow + x_p^\rightarrow = h_p & \text{if } p \in \oI_1, \\
    h_p - 2x_p^\leftarrow - 2x_p^\rightarrow \le d_G(v_p^\leftarrow, v_p^\rightarrow) & \text{if } p \in \oI_2, \\
    x_p^\leftarrow = p_T^\leftarrow \text{ and } x_p^\rightarrow = h_p - p_T^\rightarrow & \text{if } p \in \oI'.
  \end{cases}
\end{equation}

Let $V_p^\leftarrow = \{ p^1, \dots, p^{x_{p}^\leftarrow - 1} \}$ and $V_p^\rightarrow = \{ p^{h_p - x_i^\rightarrow + 1}, \dots, p^{h_p - 1} \}$ for each $p \in \mathcal{\oI}$.
We show that \cref{constraint:x} guarantees that the vertices between $\ou^\leftarrow$ and $\ou^\rightarrow$ are covered if $p \not\in \oI_0$.

\begin{lemma}
	\label{lemma:constraintx}
	If \cref{constraint:x} is fulfilled, then $Q_p = V(P) \setminus (\{ p^\leftarrow, p^\rightarrow \} \cup V_p^\leftarrow \cup V_p^\rightarrow) \subseteq I[S]$ holds for each $p \in \mathcal{\oI}$.
\end{lemma}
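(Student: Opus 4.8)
The plan is to first rewrite $Q_i$ explicitly: by the definitions of $V_i^\leftarrow$ and $V_i^\rightarrow$ we have $Q_i = \{ v_i^{x_i^\leftarrow}, v_i^{x_i^\leftarrow + 1}, \dots, v_i^{h_i - x_i^\rightarrow} \}$, so $Q_i$ is precisely the subpath of $P_i$ running from $u_i^\leftarrow = v_i^{x_i^\leftarrow}$ to $u_i^\rightarrow = v_i^{h_i - x_i^\rightarrow}$. I would then split into the cases $i \in I_1 \cup I_2$ (where $L_i = \emptyset$) and $i \in I'$ (where $L_i \ne \emptyset$), since $X$ covers $Q_i$ through different vertices in each. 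For $i \in I_1 \cup I_2$ both $u_i^\leftarrow$ and $u_i^\rightarrow$ lie in $X$, so it suffices to show that the subpath of $P_i$ between them is a shortest $u_i^\leftarrow$--$u_i^\rightarrow$ path in $G$; then $Q_i \subseteq I[u_i^\leftarrow, u_i^\rightarrow] \subseteq I[X]$.

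The crux is the distance formula
\[
 d_G(u_i^\leftarrow, u_i^\rightarrow) = \min\{\, h_i - x_i^\leftarrow - x_i^\rightarrow, \; x_i^\leftarrow + x_i^\rightarrow + d_G(v_i^\leftarrow, v_i^\rightarrow) \,\}.
\]
This holds because, with $L_i = \emptyset$, every internal vertex of $P_i$ has degree two, so the interior of $P_i$ is attached to the rest of $G$ only through the junction vertices $v_i^\leftarrow$ and $v_i^\rightarrow$; consequently any $u_i^\leftarrow$--$u_i^\rightarrow$ path either stays inside $P_i$ (the direct subpath, of length $h_i - x_i^\leftarrow - x_i^\rightarrow$) or leaves and re-enters through both junction vertices (total length $x_i^\leftarrow + x_i^\rightarrow + d_G(v_i^\leftarrow, v_i^\rightarrow)$). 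For $i \in I_2$, \Cref{constraint:x} forces $h_i - 2x_i^\leftarrow - 2x_i^\rightarrow \le d_G(v_i^\leftarrow, v_i^\rightarrow)$, which rearranges to $h_i - x_i^\leftarrow - x_i^\rightarrow \le x_i^\leftarrow + x_i^\rightarrow + d_G(v_i^\leftarrow, v_i^\rightarrow)$; hence the minimum equals the direct length and $d_G(u_i^\leftarrow, u_i^\rightarrow) = h_i - x_i^\leftarrow - x_i^\rightarrow$. A short triangle-inequality computation (for $w = v_i^j$ with $x_i^\leftarrow \le j \le h_i - x_i^\rightarrow$ one gets $d_G(u_i^\leftarrow, w) + d_G(w, u_i^\rightarrow) \le h_i - x_i^\leftarrow - x_i^\rightarrow = d_G(u_i^\leftarrow, u_i^\rightarrow)$) then places all of $Q_i$ in $I[u_i^\leftarrow, u_i^\rightarrow]$. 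For $i \in I_1$ we have $x_i^\leftarrow + x_i^\rightarrow = h_i$, so $Q_i$ degenerates to the single vertex $u_i^\leftarrow = u_i^\rightarrow \in X$ and the claim is immediate.

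For $i \in I'$ the constraints $x_i^\leftarrow = l_i^\leftarrow$ and $x_i^\rightarrow = h_i - l_i^\rightarrow$ give $u_i^\leftarrow = v_i^{l_i^\leftarrow}$ and $u_i^\rightarrow = v_i^{l_i^\rightarrow}$, so $Q_i = \{ v_i^{l_i^\leftarrow}, \dots, v_i^{l_i^\rightarrow} \}$. Here I would reuse the argument behind \Cref{lemma:nonebetweenleaves}: all leaf vertices $l_i^j$ with $j \in L_i$ belong to $X$, and for any two consecutive indices $l < l'$ of $L_i$, \Cref{rr:deg1deg2deg1} guarantees $d_G(v_i^l, v_i^{l'}) = l' - l$, so the subpath between $v_i^l$ and $v_i^{l'}$ is shortest and the two leaves at $v_i^l$ and $v_i^{l'}$ satisfy $\{ v_i^l, \dots, v_i^{l'} \} \subseteq I[l_i^l, l_i^{l'}]$. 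Covering $Q_i$ position by position with such consecutive pairs yields $Q_i \subseteq I[X]$.

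The main obstacle I anticipate is the distance formula in the case $i \in I_1 \cup I_2$: one has to argue carefully that a shortest $u_i^\leftarrow$--$u_i^\rightarrow$ path cannot gain by leaving $P_i$ at only one junction vertex, because reaching the interior vertex $u_i^\rightarrow$ after entering the interior of $P_i$ through a single endpoint would force the path to revisit an interior vertex and hence not be shortest. Once this structural observation is secured, the remainder is a direct application of \Cref{constraint:x} and the triangle inequality.
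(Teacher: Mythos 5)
Your proof is correct and follows essentially the same route as the paper's: a case split over $I_1$, $I_2$, and $I'$, using \Cref{constraint:x} to certify that the subpath of $P_i$ between $u_i^\leftarrow$ and $u_i^\rightarrow$ is a shortest path when $i \in I_2$ (with $I_1$ degenerating to a single vertex), and \Cref{lemma:nonebetweenleaves} for $i \in I'$. You merely spell out in more detail the structural fact---that interior vertices of $P_i$ attach to the rest of $G$ only through $v_i^\leftarrow$ and $v_i^\rightarrow$---which the paper leaves implicit.
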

\begin{proof}
If $p \in \oI_1$, then we have~$Q_p = \{ p^{x_p^\leftarrow} \} = \{ p^{x_p^\rightarrow} \}$ and hence $Q_p \subseteq I[S]$.
	If $p \in \oI_2$, then we have~$Q_p = \{ p^{x_p^\leftarrow}, p^{x_p^\leftarrow + 1}, \dots, p^{x_p^\rightarrow} \}$.
	It follows from \cref{constraint:x} that $d_{P}(\ou^\leftarrow, \ou^\rightarrow) \le d_{P}(\ou^\leftarrow, p^\leftarrow) \allowbreak + d_G(p^\leftarrow, p^\rightarrow) + d_{P}(p^\rightarrow, \ou^\rightarrow)$.
	This implies that $Q_p \subseteq I[\ou^\leftarrow, \ou^\rightarrow] \subseteq I[S]$.
	Finally, if~$p \in \oI'$, then all vertices in $Q_p$ are covered as shown in \cref{lemma:nonemptyp}.
\end{proof}

Next, we introduce constraints to determine whether there is a shortest path between~$\ou^r$ and $q_S^{s}$ visiting $p^r$ and $q^s$, for each $p \ne q \in E(\widetilde G)$ and $r, s \in \mathcal{R}$ (recall that $\mathcal R = \{\leftarrow, \rightarrow\}$).
Using binary variables~$a_{p, q}^{r, s}, b_{p, q}^{r, s}, c_{p, q}^{r, s},\allowbreak z_{p, q}^{r, s}$, we add the following constraints for each $p \ne q \in \mathcal{\oI}$ and $r, s \in \mathcal{R}$.
Informally, if~$z_{p, q}^{r, s} = 1$, then there exists a shortest path as described above.
\begin{equation} \label[constraint]{constraint:nzerocover}
	\left\{
	\begin{aligned}
		&(x_p^r + d_G(p^r, q^s) + x_q^s) - (x_p^r + d_G(p^r, q^{\overline{s}}) + h_q - x_q^s)
		\le N(1 - a_{p, q}^{r,s}), \\
		&(x_p^r + d_G(p^r, q^s) + x_q^s) - (h_p - x_p^r + d_G(p^{\overline{r}}, q^s) + x_q^s)
		\le N(1 - b_{p, q}^{r, s}),\\
		&(x_p^r + d_G(p^r, q^s) + x_q^s) - (h_p - x_p^r + d_G(p^{\overline{r}}, q^{\overline{s}}) + h_q - x_q^s)
		\le N(1 - c_{p, q}^{r, s}),\\
		& 3 - a_{p, q}^{r, s} - b_{p, q}^{r, s} - c_{p, q}^{r, s} \le 3 - 3 z_{p, q}^{r, s}.
	\end{aligned}
	\right.
\end{equation}
Here~$N$ is some sufficiently large number (i.e., $N = 100 \cdot |E(G)|$ will do).
\begin{lemma}
	\label{lemma:middlecover}
	If \cref{constraint:nzerocover} is fulfilled with $z_{p, q}^{r, s}=1$, then $I[p^{r}, q^{s}] \subseteq I[p_S^{r}, q_S^{s}]$.
\end{lemma}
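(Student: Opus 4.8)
The plan is to reduce the interval inclusion to the single distance identity $d_G(u_i^r, u_{i'}^{r'}) = x_i^r + d_G(v_i^r, v_{i'}^{r'}) + x_{i'}^{r'}$ and then close the argument with a triangle-inequality squeeze. First I would unpack the hypothesis: the last line of \Cref{constraint:nzerocover} forces $y_{i,i'}^{r,r',1} = y_{i,i'}^{r,r',2} = y_{i,i'}^{r,r',3} = 1$ whenever $z_{i,i'}^{r,r'} = 1$, and then its first three lines say that the ``direct'' value $x_i^r + d_G(v_i^r, v_{i'}^{r'}) + x_{i'}^{r'}$ is no larger than the three alternatives obtained by rerouting through the far endpoint $v_i^{\overline r}$ of $P_i$, the far endpoint $v_{i'}^{\overline{r'}}$ of $P_{i'}$, or both. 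These four expressions are precisely the lengths of the four ways of joining $u_i^r$ to $u_{i'}^{r'}$ by leaving $P_i$ through one of its two endpoints and entering $P_{i'}$ through one of its two endpoints.

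To turn this into a genuine distance statement I would use that, since $i \ne i'$, the pair $\{ v_i^\leftarrow, v_i^\rightarrow \}$ separates the interior of $P_i$ (together with its attached leaves) from the rest of $G$, and likewise $\{ v_{i'}^\leftarrow, v_{i'}^\rightarrow \}$ for $P_{i'}$. Applying this twice yields
\[
  d_G(u_i^r, u_{i'}^{r'}) = \min_{s, t} \bigl( d_G(u_i^r, s) + d_G(s, t) + d_G(t, u_{i'}^{r'}) \bigr),
\]
where $s$ ranges over $\{ v_i^r, v_i^{\overline r} \}$ and $t$ over $\{ v_{i'}^{r'}, v_{i'}^{\overline{r'}} \}$. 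The key auxiliary fact is the near-endpoint identity $d_G(u_i^r, v_i^r) = x_i^r$ and $d_G(u_{i'}^{r'}, v_{i'}^{r'}) = x_{i'}^{r'}$, i.e.\ that the subpath of $P_i$ from $u_i^r$ to its designated endpoint is a shortest path; for $i \in I'$ this is exactly \Cref{lemma:closeenough}, and for $i \in I_1 \cup I_2$ it follows from \Cref{constraint:x} together with the canonical placement of solution vertices guaranteed by \Cref{lemma:atmost3,lemma:atmost2}. With this, the $(v_i^r, v_{i'}^{r'})$-term of the minimum equals $x_i^r + d_G(v_i^r, v_{i'}^{r'}) + x_{i'}^{r'}$, giving the upper bound $d_G(u_i^r, u_{i'}^{r'}) \le x_i^r + d_G(v_i^r, v_{i'}^{r'}) + x_{i'}^{r'}$.

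For the matching lower bound I would show that each of the three remaining terms is also at least $x_i^r + d_G(v_i^r, v_{i'}^{r'}) + x_{i'}^{r'}$, distinguishing two cases per term according to whether the far-endpoint segment is traversed directly or shortcut back through the near endpoint. In the direct case the term equals the corresponding expression bounded in \Cref{constraint:nzerocover}, so $z_{i,i'}^{r,r'} = 1$ gives the bound at once; in the shortcut case the term is at least the direct value by the triangle inequality $d_G(v_i^r, v_i^{\overline r}) + d_G(v_i^{\overline r}, t) \ge d_G(v_i^r, t)$ and its mirror for $P_{i'}$. Hence $d_G(u_i^r, u_{i'}^{r'}) = x_i^r + d_G(v_i^r, v_{i'}^{r'}) + x_{i'}^{r'}$. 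Finally, for $w \in I[v_i^r, v_{i'}^{r'}]$ I would combine $d_G(v_i^r, w) + d_G(w, v_{i'}^{r'}) = d_G(v_i^r, v_{i'}^{r'})$ with the path-length bounds $d_G(u_i^r, w) \le x_i^r + d_G(v_i^r, w)$ and $d_G(w, u_{i'}^{r'}) \le d_G(w, v_{i'}^{r'}) + x_{i'}^{r'}$ to obtain $d_G(u_i^r, w) + d_G(w, u_{i'}^{r'}) \le d_G(u_i^r, u_{i'}^{r'})$; as the reverse inequality is the triangle inequality, equality holds and $w \in I[u_i^r, u_{i'}^{r'}]$. I expect the main obstacle to be the middle step: securing the near-endpoint identity uniformly across $I_1$, $I_2$ and $I'$ and carrying out the shortcut/direct case distinction cleanly, since a priori the distance $d_G(u_i^r, v_i^{\overline r})$ may be realized by a shortcut rather than along $P_i$.
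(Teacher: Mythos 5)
Your proof is correct and follows essentially the same route as the paper's (much terser) argument: force $y^1=y^2=y^3=1$, conclude that the route through $v_i^r$ and $v_{i'}^{r'}$ realizes $d(u_i^r,u_{i'}^{r'})$ by comparing the four endpoint-to-endpoint routes, and then propagate the interval inclusion. One small repair: the near-endpoint identity $d_G(u_i^r,v_i^r)=x_i^r$ for $i\in I_1\cup I_2$ should not be sourced from \Cref{lemma:atmost3,lemma:atmost2} (those concern the canonical solution $T$, not an arbitrary ILP-feasible assignment); it already follows from the second inequality of \Cref{constraint:nzerocover} combined with the triangle inequality, which together yield $x_i^r \le h_i - x_i^r + d_G(v_i^{\overline{r}}, v_i^{r})$.
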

\begin{proof}
	We fix~$p, q, r, s$ and remove them from the sub- and superscripts of the binary variables.
	If~$z = 1$, then we obtain~$3 - a - b - c \le 0$.
	Since~$a, b, c \in \{0, 1\}$ we have that~$a = b = c = 1$, which in turn implies that there is a shortest path between~$u_p^r$ and~$u_q^s$ that visits~$p^r$ and~$q^s$.
\end{proof}

We add a similar constraint for shortest paths between~$\ou^\leftarrow$ and~$\ou^\rightarrow$ for each~$p \in E(\widetilde G)$.
For each~$p \in \mathcal \oI$ and~$r \in \mathcal R$ we add the constraint
\begin{equation}
	\label[constraint]{constraint:nzerocover2}
	(x_p^r + d_G(p^r, p^{\overline{r}}) + x_p^{\overline{r}}) - (h_p - x_p^r - x_p^{\overline{r}}) \le N(1 - z_{p, p}^{r, \overline{r}}).
\end{equation}
Here $z_{p, p}^{r, \overline{r}}$ is a binary variable.
It is easy to see that if $z_{p, p}^{r, \overline{r}} = 1$, then there is a shortest path from~$\ou^r$ to $\ou^{\overline{r}}$ going through $p^r$ and $p^{\overline{r}}$.

Now we use \cref{constraint:nzerocover,constraint:nzerocover2} to cover the remaining vertices.
First we handle the paths without any solution vertex.
For each $\ell \in \oI_0$, we add the following constraint to guarantee that there are $p, q \in E(\widetilde G)$ and $r, s \in \mathcal{R}$ such that $V(L) \subseteq I[\ou^r, q_S^{s}]$, where~$L$ is the path associated with~$\ell$:
\begin{equation}
	\label[constraint]{constraint:izerocover}
	\sum_{
		\substack{
			p, q \in \mathcal{\oI}, \, r, s \in \mathcal{R}, (p, r) \ne (q, s) \\
			d(p^r, \ell^{\leftarrow}) + h_\ell + d(\ell^{\rightarrow}, q^s) = d(p^r, q^s)
		}
	}
	z_{p, q}^{r, s}
	\ge 1.
\end{equation}
To ensure that every vertex~$v \in V(\widetilde G) \setminus \widetilde S$ is covered, we add \cref{constraint:izerocover}, where~$L$ is a path of length zero with endpoint~$v$, that is, $h_\ell = 0$ and~$\ell^\leftarrow = \ell^\rightarrow = v$.

Finally, we deal with the vertices in $V_p^\leftarrow$ and $V_p^\rightarrow$.
Note that for each $p \in E(\widetilde G)$ and $r \in \mathcal{R}$, the vertices in $V_p^r$ are covered if
\begin{itemize}
	\item
		it holds that $x_p^r \le 1$ (that is, $V_p^r = \emptyset$), or
	\item
		there is $q \in E(\widetilde G)$ and $s \in \mathcal{R}$ such that a shortest $\ou^r$--$q_S^s$-path visits~$p^r$.
\end{itemize}
For each $p \in \mathcal{\oI}$ and $r \in \mathcal{R}$, let $y_p^r$ be a binary variable and add the following constraint:
\begin{equation}
	\label[constraint]{constraint:icover}
	\begin{aligned}
		x_p^r - 1 \le N(1 - y_p^r) \quad\text{and}\quad
		y_p^r + \sum_{q \in E(\widetilde G), s \in \mathcal{R}} z_{p, q}^{r, s} \ge 1.
	\end{aligned}
\end{equation}
It is easy to verify that if $y_p^r = 1$, then $x_p^r \le 1$ must hold.
This concludes the ILP formulation.
We show that our ILP formulation finds a minimum geodetic set.

\begin{theorem}
	\label{thm:fen}
	\emph{\textsc{Geodetic Set}} can be solved in~$O^*(2^{O(\fen(G)^2)})$ time.\footnote{the~$O^*(\cdot)$ notation hides factors that are polynomial in the input size}
\end{theorem}
\begin{proof}
	We prove that there is a geodetic set~$S \supseteq T$ satisfying \cref{lemma:nonemptyp,lemma:emptyp} if and only if one of our ILP instances is a yes-instance.
	The forward direction is clearly correct.
	The correctness of the other direction is due to the following observations.
	\begin{itemize}
		\item
			The vertices in~$P$ for~$p \in \oI_0$ as well as the vertices in~$V(\widetilde G) \setminus \widetilde S$ are covered because of \cref{constraint:izerocover}.
		\item
			For each $p \in \mathcal{\oI}$, $V_i^\leftarrow$ and $V_i^\rightarrow$ are covered due to \cref{constraint:icover}.
			The remaining vertices are covered due to \cref{lemma:middlecover}.
	\end{itemize}
	Note that we construct $2^{O(\fen(G))}$ instances of ILP.
	Each ILP instance uses $O(\fen(G)^2)$ binary variables and $O(\fen(G))$ variables which are not necessarily binary.
	To solve one ILP instance, we first try every assignment to binary variables (note that there are $2^{O(\fen(G)^2)}$ assignments).
	Then, we solve an ILP instance with $O(\fen(G))$ variables, which requires $O^*(\fen(G)^{O(\fen(G))})$ time \cite{Len83}.
	This results in an algorithm whose running time is~$O^*(2^{O(\fen(G)^2)})$.
\end{proof}

\section{Fixed-Parameter Tractability for Clique-Width with Diameter}
\label{sec:cw}
In this section we obtain fixed-parameter tractability results for clique-width combined with diameter, and for tree-depth.
Our algorithm is based on a theorem by Courcelle et al.~\cite{CMR00}:
If a graph property~$\pi$ can be expressed as a formula $\varphi$ in~$\mso_1$ logic, then whether a graph $G$ has~$\pi$ can be determined in $O(f(\cw(G) + |\varphi|) \cdot (|V(G)| + |E(G)|))$ time for some function $f$.

\begin{theorem}
	\label{cor:twdiam}
	\emph{\textsc{Geodetic Set}} is fixed-parameter tractable with respect to $\cw(G) + \diam(G)$.
\end{theorem}
\begin{proof}
	We describe how to express \textsc{Geodetic Set} in $\mso_1$ logic.
	We define
	\begin{align*}
		\varphi = \exists S \: \left(\forall v \: [\exists u, w \: (u \in S \wedge w \in S \wedge \visit(u, v, w))]\right),
	\end{align*}
	where $\visit(u, v, w)$ is true if and only if there is a shortest path $u$--$w$ visiting $v$.
	It remains to construct $\visit(u, v, w)$.
	First, let us define a formula $\spath(v_1, \dots, v_i)$ which evaluates to true if and only if $(v_1, \dots, v_i)$ is a path:
	\begin{align*}
		\spath(v_1, \dots, v_\delta) = \bigwedge_{j \in [i - 1]} v_j v_{j + 1} \in E(G).
	\end{align*}
	We then define $\idist_i(u, w)$ which is true if and only if $d_G(u, w) = i$.
	\begin{align*}
		\idist_i(u, w) =& \exists v_2, \dots, v_{i - 1} \, (\spath(u, v_2, \dots, v_{i - 1}, w)) \\
		&\wedge \bigwedge_{j \in [i - 1]} \nexists v_2, \dots, v_{j - 1} (\spath(u, v_2, \dots, v_{j - 1}, w)).
	\end{align*}
	Finally, we define $\visit(u, v, w)$:
	\begin{align*}
		\visit(u, v, w) = \bigvee_{i \in [\diam(G)]} \left( \idist_i (u, w) \wedge \left[ \bigvee_{j \in [i - 1]} \idist_j (u, v) \wedge \idist_{j - i} (v, w) \right] \right).
	\end{align*}
	Note that $|\varphi| \in \diam(G)^{O(1)}$.
	Thus, fixed-parameter tractability for $\cw(G) + \diam(G)$ follows from Courcelle's theorem.
\end{proof}

Note that $\cw(G) \le 2$ and $\diam(G) \le 2$ for any cograph $G$.
Thus, our result extends polynomial-time solvability on cographs proven by Dourado et al.~\cite{DPRS10}.

We also obtain fixed-parameter tractability for tree-depth as well as for modular-width from \cref{cor:twdiam}.
The tree-depth of a graph~$G$ can be roughly approximated by~$\log h \le \td(G) \le h$, where~$h$ is the height of a depth-first search tree of~$G$ \cite{NM12}.
Hence, the length of all paths in~$G$, specifically the diameter of~$G$, is at most~$2^{\td(G)}$.
Moreover, $\cw(G) \le 3 \cdot 2^{\tw(G) - 1}$ \cite{CR05} and $\tw(G) \le \td(G) - 1$.
Similarly, $\cw(G) \le \mw(G)$ (by definition) and~$\diam(G) \le \max \{2, \mw(G)\}$ \cite{KLMP18}.
Consequently, we obtain the following.

\begin{corollary}
	\label{thm:tdtd}
	\emph{\textsc{Geodetic Set}} is fixed-parameter tractable with respect to tree-depth and with respect to modular-width.
\end{corollary}

\section{Conclusion}
We initiated a parameterized complexity study of \textsc{Geodetic Set} for parameters measuring tree-likeness.
We conclude this work by suggesting some future research directions.
None of the fixed-parameter algorithms presented in this work are practical.
Are there more efficient fixed-parameter algorithms with respect to feedback edge number, tree-depth or modular-width?
Further, while we can quite surely exclude fixed-parameter tractability for feedback vertex number and path-width, it is still open whether \textsc{Geodetic Set} is in XP with any (combination) of these parameters.
Recall that the related \textsc{Geodetic Hull} problem is in XP with respect to tree-width \cite{KMS19}, but for \textsc{Geodetic Set}, even the complexity on series-parallel graphs (which have tree-width two) is unknown.

Going to related problems and parameters, it is open whether \textsc{Metric Dimension} is fixed-parameter tractable with respect to the feedback edge number \cite{Epp15}.
This is especially interesting since the problem behaves similarly to \textsc{Geodetic Set} in terms of complexity:
\textsc{Metric Dimension} is fixed-parameter tractable with respect to tree-depth \cite{San17} and with respect to modular-width \cite{BFGR17}, but~$\Wone$-hard with respect to path-width \cite{BP19} and~$\Wtwo$-hard with respect to the solution size \cite{HN13}.
We are optimistic that the method presented in \cref{sec:fen} can be used to answer this question positively, especially since Epstein et al.~\cite{ELW15} showed that the number of solution vertices on a path of degree-two vertices (cf.\ \cref{lemma:emptyp}) is bounded by a constant.

\paragraph{Acknowledgement.}
We thank Lucia Draque Penso (Ulm University) for suggesting studying \textsc{Geo\-detic Set} from a view of parameterized complexity, and we thank André Nichterlein and Rolf Niedermeier (both TU Berlin) for helpful feedback and discussion.
We are also grateful to an anonymous reviewer for suggesting that the ILP instances in \Cref{sec:fen} can be solved more efficiently.

\bibliographystyle{plainurl}
\bibliography{geodetic-set}

\end{document}